\def\Snospace~{\S{}}
\DeclarePairedDelimiter\abs{\lvert}{\rvert}%
\DeclarePairedDelimiter\norm{\lVert}{\rVert}%
\let\oldabs\abs
\def\abs{\@ifstar{\oldabs}{\oldabs*}}
\let\oldnorm\norm
\def\norm{\@ifstar{\oldnorm}{\oldnorm*}}
\newcommand{\eps}{\varepsilon}
\newcommand{\ie}{i.e.\xspace}
\newcommand{\eg}{e.g.\xspace}
\newcommand{\llin}{\ensuremath{\lambda_1}\xspace}
\newcommand{\qstar}{\ensuremath{\mathsf{Q}^*}\xspace}
\newcommand{\K}{\mathcal{K}}
\newcommand{\N}{\mathcal{N}}
\newcommand{\meas}{\mathrm{\textbf{meas}}}
\newcommand{\trc}[1]{\mathrel{#1^*}}
\newcommand{\die}{\square}
\newcommand{\Lopen}{\left\lbrack}
\newcommand{\Lclose}{\right\rbrack}
\newcommand{\Lcons}{\mathop{:}}
\newcommand{\Lsep}{\mathpunct{,}}
\newcommand{\app}{{\mathbin{+\!\!\!+}}}
\newcommand{\nil}{\Lopen~\Lclose}
\newcommand{\A}{\mathcal{A}}
\newcommand{\Pow}{\mathcal{P}}
\newcommand{\Tree}{\mathcal{T}}
\newcommand{\liv}{\mathrm{Liv}}  
\newcommand{\dist}{d}  
\newcommand{\Dist}{\mathscr{D}}  
\newcommand{\List}{\mathscr{L}}
\newcommand{\Real}{\mathbb{R}}
\newcommand{\Det}{\mathrm{Det}}
\newcommand{\seman}[1]{\left \llbracket #1 \right \rrbracket}
\newcommand{\support}{\mathrm{supp}}
\newcommand{\Ra}{\mapsto}
\newcommand{\Rb}{\twoheadrightarrow}
\newcommand{\LRb}{\twoheadleftarrow\mathrel{\mkern-15mu}\twoheadrightarrow}
\newcommand{\La}{\mapsfrom}
\newcommand{\Lb}{\twoheadleftarrow}
\newcommand{\la}{\leftarrow}
\newcommand{\ra}{\mathbin{\rightarrow}}
\newcommand{\comm}{\mathrel{\dashv}}
\renewcommand{\sim}{\mathrel{\thicksim}}
\newcommand{\ssim}{\mathrel{\thickapprox}}
\renewcommand{\o}{\mathop{\cdot}}
\newcommand{\Pb}{\Rb_{P}}
\newcommand{\Pbl}{\Lb_{P}}
\newcommand{\rmod}{\mathord{/}} 
\newcommand{\LCR}{\mathrm{LC}}
\newcommand{\SN}{\mathrm{SN}}
\newcommand{\CR}{\mathrm{CR}}
\newcommand{\UN}{\mathrm{UN}}
\newcommand{\UTD}{\mathrm{UTD}}
\newcommand{\ULD}{\mathrm{ULD}}
\newcommand{\bang}[1]{!#1}
\newcommand{\bnfor}{~\mathbin{\mid}~}
\newcommand{\rulename}[1]{\ensuremath{\mbox{\textsc{#1}}}}
\newcommand{\rn}[1]{\rulename{#1}}
\newcommand{\lam}{\lambda}
\newcommand{\email}[1]{\href{mailto:#1} {\texttt{\normalshape #1}}}
\begin{document}
\begin{frontmatter}
  \title{Confluence in Probabilistic Rewriting}
    \author[UNQ,ICC]{Alejandro D{\'\i}az-Caro\thanksref{proyectos}}
    \author[CIFASIS]{Guido Mart{\'\i}nez}

    \address[UNQ]{Universidad Nacional de Quilmes\\
      Roque S\'aenz Pe\~na 352. B1876BXD Bernal, Buenos Aires, Argentina }
      \address[ICC]{CONICET-Universidad de Buenos Aires. 
      Instituto de Investigaci\'on en Ciencias de la Computaci\'on\\
      Pabell\'on 1, Ciudad Universitaria. C1428EGA Buenos Aires, Argentina\\
      \email{adiazcaro@icc.fcen.uba.ar}
    }

    \address[CIFASIS]{CIFASIS-CONICET \& Universidad Nacional de Rosario\\
      Ocampo y Esmeralda, S2000EZP. Rosario, Santa Fe, Argentina \\
    \email{martinez@cifasis-conicet.gov.ar}
  }

    \thanks[proyectos]{Partially supported by projects ECOS-Sud A17C03 QuCa and PICT 2015-1208.}

\begin{abstract}
  Driven by the interest of reasoning about probabilistic programming languages,
  we set out to study a notion of uniqueness of normal forms for them. To provide a
  tractable proof method for it, we define a property of \emph{distribution confluence}
  which is shown to imply the desired uniqueness (even for infinite sequences of reduction) and further properties. We then
    carry over
  several criteria from the classical case, such as Newman's lemma, to simplify
  proving confluence in concrete languages.
  Using these criteria, we obtain simple proofs of confluence for \llin,
  an affine probabilistic $\lam$-calculus, and for \qstar, a quantum programming
  language for which a related property has already been proven in the literature.
\end{abstract}

\begin{keyword}
  abstract rewriting system, probabilistic rewriting, confluence
\end{keyword}
\end{frontmatter}

\section{Introduction}

In the formal study of programming languages, modelling execution via a
small-step operational semantics is a popular choice.
Such a semantics is given by an \emph{abstract rewriting system} (ARS) which,
mathematically, is no more than a binary relation on terms specifying
whether one term can rewrite to another.
This relation is not required to be a function, and can thus allow for a
program to rewrite in two different ways.
In such a case, it is important that the different execution paths for
a given program reach the same final value (if any); thus guaranteeing
that any two determinisations of the semantics (\ie~strategies)
coincide on every program.

This correctness property is expressible at the level of
relations, and is known as \emph{uniqueness of normal forms} (UN):
any two irreducible terms reachable from a common starting point must be equal.
For non-trivial languages, such as the $\lam$-calculus,
it can be hard to prove this property directly.
%
Fortunately, the property of \emph{confluence} can serve as a proof
method for it, since it trivially implies UN and yet its proof
tends to be more tractable.
This was the approach followed by Church and Rosser in~\cite[Corollary 2]{ChurchRosser},
where UN and confluence were first proven for the $\lam$-calculus in 1936.
Nowadays, confluence is widely used to show the adequacy of operational
semantics in many kinds of programming languages.
%

Since some decades ago, there has been growing interest in
\emph{probabilistic} programming languages~\cite{Gordon,
Kozen1981} and, in particular, those where reduction itself is
stochastic.
Properties of such languages have been studied, in particular,
from a language-independent rewriting
approach~\cite{BournezHoyrupRTA03,BournezGarnier,BournezKirchner,KirkebyChristiansen}
and for specific calculi~\cite{DalLagoZorzi11,DalLago,DalLago2017,Jano2011,DiPierro}.
For example, taking the former approach, \cite{BournezGarnier} introduces a notion
of probabilistic termination of rewriting systems and provides
techniques to prove it, extending the classical ones to this
probabilistic setting.
Following the latter approach, \cite{DalLagoZorzi11} takes a
non-deterministic $\lam$-calculus and endows it with probabilistic
call-by-value and call-by-name operational semantics (in both
small-step and big-step style for each) and proves they simulate
each other via CPS translations, following a classic result by
Plotkin~\cite{PlotkinCPS}.

Our focus in this paper is the interaction between
probabilistic reductions and non-determinism, where the latter
comes from different possible reduction choices.
Such choices exist, for example, when a given program contains
two reducible subexpressions, each of which is probabilistic.
In that case, the program can take a step to two different
normalised distributions depending on the choice.
Given such non-determinism, a natural question
to ask is whether the result of a program
(which is a distribution of values~\cite{Kozen1981})
is not affected by the strategy, analogously
to UN for classical languages.
This is precisely the property we set out to study here,
developing an associated notion of confluence for it.

While this same property has already been studied in the
literature~\cite{Jano2011,DalLago}, it has only been done for concrete
languages, and so a language-independent study was previously lacking.
Further, the techniques employed are not immediately applicable to other
calculi.
There are also other studies of confluence in a probabilistic
setting at a more abstract level~\cite{BournezKirchner, KirkebyChristiansen},
but such notions are fundamentally different from what we previously
described, and of limited use for programming languages.

For a concrete example, let us take a hypothetical language for representing
die rolls, where
$\die$ represents an unrolled die which can reduce
to any element in $\{1,\ldots,6\}$ with equal probability
and `$-$' represents subtraction.
For a pair of dice
$(\die, \die)$, it should be allowed to choose which one to roll
first.
Rolling the first die can give in any term in the set
$\{(i,\die)\}_{i=1,6}$ with equal probability; and similarly for the
second.
When continuing the rolls, both alternatives provide the same
uniform distribution.
However, this could be not so: consider the term $(\lam x.~x-x)~\die$.
If the die is rolled before $\beta$-reducing, the result can
only be $0$, but if one $\beta$-reduces first, obtaining $(\die,
\die)$, the final result can be any number in $\{-5,\ldots,5\}$.
We shall later present a similar language and provide conditions to avoid
this discrepancy.

\textbf{Outline.}
In \autoref{sec:PARS} we introduce
probabilistic rewriting
and the problem of uniqueness of distributions.
In
\autoref{sec:RDC} we define a rewriting system over distributions
giving rise to our notion of confluence and prove it adequate.
In
\autoref{sec:Confluence} we derive criteria which simplify
the task of proving distribution for concrete languages.
In \autoref{sec:Limit} we extend some of our results to asymptotically terminating terms.
In \autoref{sec:examples} we prove confluence for two concrete calculi:
a simple probabilistic calculus dubbed $\llin$
and
the quantum lambda calculus \qstar~\cite{DalLago}.
Finally, in
\autoref{sec:Conclusions} we conclude, give some insights on future
directions, and analyse some related work.

\section{Probabilistic Rewriting}
\label{sec:PARS}

\subsection{Preliminaries}

We assume familiarity with abstract rewriting.
We adopt the terminology from \cite{ChurchRosser} and call
a sequence of expansions followed by a sequence of reductions
a \emph{peak}.
When the order is reversed, such a sequence is called a \emph{valley}.
The property of confluence can then be expressed as ``all peaks
have a valley''.
When an ARS $\A$ is confluent, we note it by $\A \models \CR$, and
similarly for $\UN$ and other properties.
The relation \emph{$R$ modulo $E$}, noted as
$R \rmod E$, is defined for any equivalence
relation $E$ as $E \o R \o E$~\cite{JouannaudKirchnerPOPL84,PetersonStickelJACM81}.
%
%

We denote by $\List(X)$ the type of finite lists with elements in $X$,
where `$\nil$', `$\Lcons$' and `$\app$' denote the empty list, the list
constructor, and list concatenation, respectively.
We also use the notation $\Lopen a \Lsep b \Lsep c \Lclose$ for
$a \Lcons b \Lcons c \Lcons \nil$.

We define $\Dist(A) = \List(\Real^+ \times A)$, used as an
explicit representation of (finitely-supported) distributions.
There is no further restriction on $\Dist(A)$.
In particular, any given element
might appear more than once, as in $\Lopen (\sfrac13, a) \Lsep (\sfrac12, b) \Lsep
(\sfrac16,a) \Lclose$.
For a point $(p,a)$ of the distribution, $p$ is called the \emph{weight}
and $a$ the \emph{element}.
The weight of a distribution is defined to be the sum of all weights of its points,
and can be any positive real number.
When a normalised distribution is required, we use
the type $\Dist_1(A)$, defined as the set of those $d \in \Dist(A)$ with
unit weight.
We abbreviate the distribution $\Lopen (p_1, a_1) \Lsep (p_2, a_2) \Lsep \ldots \Lsep (p_n, a_n) \Lclose$
by $\Lopen (p_i, a_i) \Lclose_i$, where $n$ should be clear from context.
We also write $\alpha D$ for the distribution obtained by scaling every weight
in $D$ by $\alpha$; that is, $\alpha \Lopen (p_i, a_i) \Lclose_i = \Lopen (\alpha p_i, a_i) \Lclose_i$.

For reasoning about equivalence of distributions, we define a relation `$\sim$'
as the congruence closure of following rules
(\ie~the smallest relation satisfying the rules and such that
$D \sim D'$ implies $E_1 \app D \app E_2 \sim E_1 \app D' \app E_2$, for any
$E_1, E_2$):
\[
    \arraycolsep=4pt
    \scalebox{0.94}{$
    \begin{array}{ccc}
        \multicolumn{1}{c}{
            \inferrule*[lab=Flip]{ }{
                        \Lopen (p,a) \Lsep (q,b) \Lclose
                        \sim
                 \Lopen (q,b) \Lsep (p,a) \Lclose
            }
        } &
        \inferrule*[lab=Join]{ }{
            \Lopen (p,a) \Lsep (q,a) \Lclose
                \sim
         \Lopen (p+q, a) \Lclose}
        &
        \inferrule*[lab=Split]{ }{
            \Lopen (p+q, a) \Lclose
                \sim
         \Lopen (p,a) \Lsep (q,a) \Lclose}
       \\
    \end{array}
    $}
\]
Note that $\sim$ is symmetric, since \rn{Join} is the inverse of \rn{Split}
and \rn{Flip} is its own inverse.
We distinguish some subsets of $\sim$ by limiting the rules that may be used.
We note by $S$ the congruence closure of \rn{Split},
by $(FJ)$ the congruence closure of both \rn{Flip} and \rn{Join},
and similarly for other subsets.

We call two distributions \emph{equivalent} when they are related
by the reflexive-transitive closure of $\sim$, noted `$\ssim$'.
Two distributions are equivalent, then, precisely when they assign
the same total weight to every element, irrespective of order and
multiplicity.

Arguably, using such a definition of distributions and equivalence
is cumbersome and less clear than using a more semantic one.
However, we feel this is
outweighed by the degree of rigor attained in later proofs (especially as we found some of
them to be quite error-prone).
As a secondary benefit, most of our development should be
straightforwardly mechanisable, since all equivalence steps are made explicit.

A tangible disadvantage of using lists is that only
\emph{finitely}-supported distributions can be represented.
This restriction (which is anyway lifted for infinite reduction sequences)
is present in other works as well (\eg~\cite{SelingerValiron,DiPierro})
and it does not seem severe for modelling programming languages.

\subsection{Probabilistic Abstract Rewriting Systems (PARS)}

To model the uncertainty in probabilistic rewriting, we cannot use a
simple relation between elements as used in ARS.
We must instead relate elements to \emph{distributions} of elements.
Further, to model elements such as $(\die, \die)$, we need to be able to
relate elements to several such distributions.
This motivates the following definition.
\begin{defn}
    \label{defn:pars}
    A probabilistic abstract rewriting system (PARS) is a pair
    $(A, {\Ra})$ where $A$ is a set and
    ${\Ra}$ a relation of type $\Pow(A \times \Dist_1(A))$
    (called the ``pointwise evolution relation'').
\end{defn}
It should be clear that every ARS is also a PARS
by taking ``Dirac'' distributions (\ie~normalised, single-point distributions).
We can provide a simple example of a PARS by extensionally listing ${\Ra}$, as is commonly done for ARS.

\begin{example}\label{exm:pars-1}
  Let $\A$ be the PARS given by
  \[
    \arraycolsep=2pt
    \begin{array}{lll@{\quad}lll@{\quad}lll@{\quad}lll}
      a &\Ra& \Lopen (\sfrac23, b) \Lsep (\sfrac13, c) \Lclose &
                                                       a &\Ra& \Lopen (\sfrac25, a) \Lsep (\sfrac35, d) \Lclose &
      b &\Ra& \Lopen (\sfrac12, c) \Lsep (\sfrac12, d) \Lclose &
                                                       c &\Ra& \Lopen (1, d) \Lclose \\
    \end{array}
  \]
  Here, $a$ is the only non-deterministic element.
  We call $d$ a \emph{terminal} element since it
  has no successor distributions.
  More significant examples are presented
  in \autoref{sec:examples}.
\end{example}


Execution in a PARS is a mixture of non-deterministic and probabilistic choices.
The first kind, corresponding to the $\Pow$ operator, occur when the machine chooses
a successor distribution for the current element.
The second kind, corresponding to the $\Dist_1$ operator, is a random
choice between the elements of the chosen successor distribution.
%
%
To model such execution, we
use the notion of \emph{computation tree}.

\begin{defn}
    \label{def:trees}
    Given a PARS $(A, {\Ra})$, we define the set of its (finite) ``computation trees''
    with root $a$ (noted $\Tree(a)$) inductively by the following rules.
    We also sometimes consider infinite computation trees, by taking the coinductively defined set instead.
    \[
        \inferrule*[]
          { }
          {a \in \Tree(a)}
    \qquad
    \inferrule*[]
          {a \Ra \Lopen (p_1, a_1) \Lsep \ldots \Lsep (p_n, a_n) \Lclose \\ t_i \in \Tree(a_i)}
          {[a ; (p_1,t_1) ; \ldots; (p_n,t_n) ] \in \Tree(a)}
      \]
\end{defn}

\noindent
\begin{minipage}[l]{\textwidth}
\setlength{\columnsep}{10pt}%
 \begin{wrapfigure}{r}{0pt}
   \scalebox{0.7}{
    \begin{tikzpicture}
        \draw(0,0)     node (a)  [] {$a$};
        \draw(-1,-1.5) node (b)  [] {$b$};
        \draw( 1,-1.5) node (c)  [] {$c$};
        \draw(-1.75,-3)   node (c') [] {$c$};
        \draw( -0.25,-3)   node (d)  [] {$d$};
        \draw( 1,-3)   node (d') [] {$d$};

        \draw[->,left]  (a) edge node {$\sfrac23$} (b);
        \draw[->,right] (a) edge node {$\sfrac13$} (c);
        \draw[->,left]  (b) edge node {$\sfrac12$} (c');
        \draw[->,right] (b) edge node {$\sfrac12$} (d);
        \draw[->,right] (c) edge node {$1$} (d');
    \end{tikzpicture}
   }
 \end{wrapfigure}
A graphical representation of a tree for the PARS in Example~\ref{exm:pars-1} is given to the right.
A tree in $\Tree(a)$ represents one possible (uncertain) evolution of the system
after starting on $a$.
There is no further assumption about trees:
in particular, if an element is expanded many times in a given tree,
different successor distributions may be used each time.
In other words, we do not assume a ``Markovian scheduler''~\cite{BournezGarnier,BournezHoyrupRTA03}.
\end{minipage}

Collecting the leaves of a tree, along with their accumulated
probabilities, gives rise to a normalised list distribution.
\begin{defn}
    \label{def:supp}
    The ``support'' of a tree $T$ is a normalised distribution, defined by:
    \[
        \begin{array}{rcl}
            \support(a) &=& \Lopen (1,a) \Lclose \\
            \support([a ; (p_1,t_1) ; \ldots; (p_n,t_n) ]) &=& p_1 \support(t_1) \app \ldots \app p_n \support(t_n) \\
        \end{array}
    \]
\end{defn}
When all the leaves of a tree are terminal elements, we call the tree
\emph{maximal} (as there is no proper supertree of it).
We can now state our property of interest.
\begin{defn}[UTD]
    A PARS $\A$ has ``unique terminal distributions''
    when for every $a$
    and $T_1, T_2 \in \Tree(a)$ maximal, we have
    $\support(T_1) \ssim \support(T_2)$.
\end{defn}
We stated before that proving UN (for ARS) directly is usually hard.
Since PARS subsume ARS, the same difficulties arise for proving UTD directly.
Therefore, we seek a property akin to confluence, providing a compositional and more
tractable proof method.

\section{Rewriting Distributions and Confluence}
\label{sec:RDC}

To arrive at a notion of confluence
we shall first define a rewriting over distributions (Definition~\ref{def:determinisation})
that is more liberal than that of computation trees (Definition~\ref{def:trees}).

\begin{defn}
    Given a PARS $\A = (A, {\Ra})$, we define the relation
    ${\Pb}$ (of type $\Pow(\Dist(A) \times \Dist(A))$)
    (called ``parallel evolution'') by the rules:
    \[
        \inferrule*[]
              {a \Ra A \\ ds \Pb ds'}
              {(p, a) \Lcons ds \Pb pA \app ds'}
        \qquad
        \inferrule*[]
              {ds \Pb ds'}
              {(p, a) \Lcons ds \Pb (p, a) \Lcons ds'}
        \qquad
        \inferrule*[]
              { }
              {\nil \Pb \nil}
    \]
\end{defn}

\noindent
Note that without using the first rule, this is just the identity relation on
distributions.
We note the subset of this relation where the first rule must be used at
least once in a step as ${\Pb^1}$, and call it \emph{proper} evolution.
The ${\Pb}$ relation can simulate
computation trees in this system, since it can be used to rewrite
their supports in the sense of Lemma~\ref{lem:tree-sim}.

\begin{defn}
    We call a relation ${\ra}$ ``compositional'' when,
    if $D_1 \ra E_1$ and $D_2 \ra E_2$, then $\alpha D_1 \app \beta D_2 \ra
    \alpha E_1 \app \beta E_2$ for all $\alpha, \beta \in \Real^+$.
\end{defn}
\begin{lemma}
  \label{lem:tree-sim}
  If $T \in \Tree(a)$, then $\Lopen (1,a) \Lclose \trc\Pb \support(T)$
\end{lemma}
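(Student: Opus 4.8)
The plan is to argue by structural induction on the (finite) computation tree $T$, with the key external ingredient being that $\trc\Pb$ is \emph{compositional}. The base case is immediate: if $T = a$ then $\support(a) = \Lopen (1,a) \Lclose$, and since $\trc\Pb$ is reflexive we are done. Before the inductive step I would isolate the auxiliary fact that $\Pb$ — and therefore $\trc\Pb$ — is compositional. This in turn breaks into closure under scaling ($D \Pb E$ implies $\alpha D \Pb \alpha E$) and closure under concatenation ($D_1 \Pb E_1$ and $D_2 \Pb E_2$ imply $D_1 \app D_2 \Pb E_1 \app E_2$); both follow by a routine induction on the derivation of the left premise (equivalently, on the length of the left list), since each of the three defining rules of $\Pb$ is stable under prepending a point and under rescaling all weights. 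To pass from compositionality of $\Pb$ to that of $\trc\Pb$ one uses that $\Pb$ contains the identity (applying the second rule at every position gives $D \Pb D$): so from $D_1 \Pb E_1$ we get $\alpha D_1 \app \beta D_2 \Pb \alpha E_1 \app \beta D_2$, and chaining such steps along a $\Pb$-path for the first component and then the second yields $\alpha D_1 \app \beta D_2 \trc\Pb \alpha E_1 \app \beta E_2$.

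For the inductive step, write $T = [a ; (p_1,t_1) ; \ldots ; (p_n,t_n)]$, coming from a step $a \Ra \Lopen (p_1,a_1) \Lsep \ldots \Lsep (p_n,a_n) \Lclose$ with $t_i \in \Tree(a_i)$, so that $\support(T) = p_1 \support(t_1) \app \ldots \app p_n \support(t_n)$. First, the first rule of $\Pb$ applied with $ds = ds' = \nil$ gives a single step $\Lopen (1,a) \Lclose \Pb \Lopen (p_1,a_1) \Lsep \ldots \Lsep (p_n,a_n) \Lclose$, and the right-hand side is exactly $p_1 \Lopen (1,a_1) \Lclose \app \ldots \app p_n \Lopen (1,a_n) \Lclose$. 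By the induction hypothesis $\Lopen (1,a_i) \Lclose \trc\Pb \support(t_i)$ for each $i$, so by compositionality of $\trc\Pb$ (the binary statement iterated $n-1$ times, with the scalars $p_i$) we obtain $p_1 \Lopen (1,a_1) \Lclose \app \ldots \app p_n \Lopen (1,a_n) \Lclose \trc\Pb p_1 \support(t_1) \app \ldots \app p_n \support(t_n) = \support(T)$. Composing this with the initial step gives $\Lopen (1,a) \Lclose \trc\Pb \support(T)$, as required.

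The only place that demands a little care is the compositionality lemma, because a single $\Pb$ step rewrites the \emph{whole} distribution list at once, so one must verify that decomposing a list into a prefix and a suffix, or rescaling all its weights, is compatible with the way the rules traverse it — but this is precisely what the induction on list length delivers, and it is what motivates placing the definition of ``compositional'' just before this lemma. The $n$-ary concatenation appearing in the inductive step is then merely the binary case iterated, and I do not anticipate any genuine obstacle beyond this bookkeeping.
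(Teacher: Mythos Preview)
Your proposal is correct and follows essentially the same route as the paper: establish that $\Pb$ (and hence $\trc\Pb$) is compositional, then proceed by structural induction on $T$. The paper's proof is terser but identical in structure; your write-up simply unpacks the compositionality argument and the inductive step that the paper leaves implicit.
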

\begin{proof}
    First, note that ${\Pb}$ is compositional.
    The result then follows by induction on $T$, using compositionality.
\end{proof}

We now define an ARS over distributions, combining both parallel evolution and
equivalence steps.
Our definition of confluence for a PARS $\A$ is then simply
the usual confluence
of that relation.

\begin{defn}\label{def:determinisation}
    Given a PARS $\A = (A, {\Ra})$, we define an associated ARS $\Det(\A)$
    (called the ``determinisation'' of $\A$)
    over the set $\Dist(A)$ by the relation ${\Rb} = ({\Pb} \cup {\ssim})$.
\end{defn}

\begin{defn}[Distribution confluence]
    \label{def:conf}
    We say a PARS $\A$ is ``distribution confluent'' (or simply ``confluent'')
    when $\Det(\A)$ is confluent in the classical sense.
\end{defn}

Reduction in $\Det(\A)$ is more liberal than the expansion of trees,
since it allows for ``partial'' evolutions.
Indeed, if $a \Ra D$,
then $\Lopen (1,a) \Lclose \Rb \Lopen (\sfrac12, a) \Lsep (\sfrac12, a) \Lclose \Rb
\sfrac12 D \app \Lopen (\sfrac12, a) \Lclose$.
Nevertheless, Lemma~\ref{lem:cr-utd} shows that its confluence implies UTD.

\begin{lemma}
  \label{lem:terminal-evolution}
  If $D_1$ is terminal and $D_1 \trc\Rb D_2$, then $D_1 \ssim D_2$ and $D_2$ is
  terminal.
\end{lemma}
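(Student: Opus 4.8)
The statement concerns a terminal distribution $D_1$ — meaning every element in its support is a terminal element, with no successor under ${\Ra}$ — and asserts that any ${\Rb}$-reduct $D_2$ of $D_1$ is both ${\ssim}$-equivalent to $D_1$ and again terminal. Since ${\Rb} = {\Pb} \cup {\ssim}$, a reduction sequence $D_1 \trc\Rb D_2$ is a finite chain of individual steps, each one either a parallel-evolution step or an equivalence step. The plan is to proceed by induction on the length of this chain, so it suffices to analyse a single step $D \Rb D'$ from a terminal distribution $D$ and show that $D'$ is terminal and $D \ssim D'$; the inductive hypothesis then carries the conclusion along the whole chain (using transitivity of ${\ssim}$).

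**The single-step case analysis.**
For a single step $D \Rb D'$ there are two cases. If $D \ssim D'$, then trivially $D \ssim D'$ (hence also $D \ssim D'$ by reflexivity at the $\ssim$-level — really $D$ and $D'$ are related by one $\sim$-step, which is included in $\ssim$), and I must check $D'$ is still terminal: this is immediate because the $\sim$-rules \rn{Flip}, \rn{Join}, \rn{Split} only permute, merge, or duplicate points without changing which elements occur, so the set of elements appearing is preserved, and terminality is a property of that set. If instead $D \Pb D'$, I argue that the first inference rule of ${\Pb}$ can never fire: that rule requires a point $(p,a) \Lcons ds$ with $a \Ra A$ for some $A$, but $D$ being terminal means no element of $D$ has any successor distribution, so the premise $a \Ra A$ is unavailable for every point. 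Hence the derivation of $D \Pb D'$ can only use the second and third rules, which between them force $D' = D$ literally (the second rule copies a point unchanged, the third handles the empty list). So $D' = D$, which is terminal and satisfies $D \ssim D'$ by reflexivity.

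**Assembling the induction.**
Combining the two cases: every single ${\Rb}$-step out of a terminal distribution lands on a terminal distribution and is an $\ssim$-step. By induction on the number of steps in $D_1 \trc\Rb D_2$ — with base case $D_1 = D_2$ trivial — we get that $D_2$ is terminal and $D_1 \ssim D_2$, using that $\ssim$ is (by definition) the reflexive-transitive closure of $\sim$ and hence transitive.

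**Expected obstacle.**
The only delicate point is the claim that a $\Pb$-derivation from a terminal distribution is forced to be the identity; this needs a small structural induction on the derivation (equivalently on the length of the list being rewritten), observing at each cons-cell that terminality of the head element blocks the first rule and terminality is inherited by the tail. Everything else is bookkeeping about the $\sim$-rules preserving the element-set, which is routine. I do not anticipate any genuine difficulty, only the need to be careful that "$D_1$ is terminal" is interpreted as "$\support$-wise, every element is terminal" and that this property is manifestly stable under both kinds of step.
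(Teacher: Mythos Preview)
Your proposal is correct and follows essentially the same approach as the paper: the key observation that ${\Pb}$ from a terminal distribution is literally the identity, followed by induction on the number of ${\Rb}$-steps using reflexivity and transitivity of ${\ssim}$. One small wrinkle: a single ${\Rb}$-step via the equivalence branch is already a full ${\ssim}$-step (since ${\Rb} = {\Pb} \cup {\ssim}$), not merely one ${\sim}$-step as you suggest in passing; this does not affect your argument, since terminality is preserved by each ${\sim}$-rule and hence by ${\ssim}$.
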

\begin{proof}
    It is clear, from the definition of ${\Pb}$, that if $D_1$ is terminal
    and $D_1 \Pb D'$, then $D_1 = D'$ (that is, exactly equal).
    The result then follows by induction on the number of steps, and the
    transitivity and reflexivity of ${\ssim}$.
\end{proof}

\begin{lemma}
    \label{lem:cr-utd}
    If $\A \models \CR$, then $\A \models \UTD$.
\end{lemma}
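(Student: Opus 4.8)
The plan is to prove $\A \models \UTD$ directly from confluence of $\Det(\A)$. Fix an element $a$ and two maximal trees $T_1, T_2 \in \Tree(a)$. I want to show $\support(T_1) \ssim \support(T_2)$. First I would invoke Lemma~\ref{lem:tree-sim} twice to obtain $\Lopen (1,a) \Lclose \trc\Rb \support(T_1)$ and $\Lopen (1,a) \Lclose \trc\Rb \support(T_2)$ (note $\trc\Pb \subseteq \trc\Rb$ since ${\Rb} = {\Pb} \cup {\ssim}$). This gives a peak in $\Det(\A)$ with apex $\Lopen (1,a) \Lclose$ and the two supports at the ends.

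Next I would apply confluence of $\Det(\A)$ to this peak: there exists a distribution $E$ with $\support(T_1) \trc\Rb E$ and $\support(T_2) \trc\Rb E$. The key remaining step is to argue that reductions starting from the support of a \emph{maximal} tree cannot do anything essential — i.e. they only rearrange weights. Since every leaf of a maximal tree is terminal, $\support(T_1)$ and $\support(T_2)$ are both terminal distributions (a distribution all of whose elements are terminal). Then Lemma~\ref{lem:terminal-evolution} applies: from $\support(T_1) \trc\Rb E$ with $\support(T_1)$ terminal we get $\support(T_1) \ssim E$, and likewise $\support(T_2) \ssim E$. By symmetry and transitivity of $\ssim$ (it is the reflexive–transitive closure of the symmetric relation $\sim$), we conclude $\support(T_1) \ssim \support(T_2)$, which is exactly UTD.

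The main obstacle I anticipate is the bookkeeping step of confirming that the support of a maximal tree really is ``terminal'' in the precise sense required by Lemma~\ref{lem:terminal-evolution} — one must check that $\support$ of a tree whose leaves are all terminal contains only terminal elements, which follows by an easy induction on the tree using the definition of $\support$ (scaling and concatenation preserve the property that every element is terminal). Everything else is a direct chaining of the three lemmas already established. A minor point worth stating explicitly is the inclusion $\trc\Pb \subseteq \trc\Rb$, so that Lemma~\ref{lem:tree-sim}'s conclusion can be fed into the confluence hypothesis for $\Det(\A)$.
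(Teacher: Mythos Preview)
Your proposal is correct and follows essentially the same route as the paper: invoke Lemma~\ref{lem:tree-sim} to form a peak at $\Lopen (1,a)\Lclose$, close it by confluence, and use Lemma~\ref{lem:terminal-evolution} on the terminal supports to collapse the valley to $\ssim$. The extra bookkeeping you spell out (the inclusion $\trc\Pb \subseteq \trc\Rb$ and the induction showing the support of a maximal tree is terminal) is left implicit in the paper but is exactly what underlies its two-line argument.
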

\begin{proof}
  Take $T_1, T_2 \in \Tree(a)$ maximal. We have from
  Lemma~\ref{lem:tree-sim} that $\support(T_2) \trc\Lb \Lopen (1,a) \Lclose \trc\Rb
  \support(T_1)$.
  By confluence, there must exist $C$ such that $\support(T_2) \trc\Rb C \trc\Lb
  \support(T_1)$.
  Since $T_1, T_2$ are maximal, their supports are terminal.
  Then, from Lemma~\ref{lem:terminal-evolution}, we get that
  $\support(T_2) \ssim C \ssim \support(T_1)$, as needed.
\end{proof}

Furthermore, beyond UTD, distribution confluence implies that diverging computations
(with no terminal distribution)
can also be joined.
As a consequence of that, confluence gives a neat method for proving
the consistency of the equational theory induced by $\Rb$, as long as two distinct
terminal elements exist.
\begin{lemma}
  If $D_1, D_2$ are terminal distributions and $\A$ is confluent, then
  $D_1 \trc\LRb D_2$ if and only if $D_1 \ssim D_2$.
\end{lemma}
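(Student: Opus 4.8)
The plan is to prove the two directions separately. The easy direction is right-to-left: if $D_1 \ssim D_2$, then since $\ssim$ is part of $\Rb$ (by Definition~\ref{def:determinisation}) and $\ssim$ is symmetric, we immediately have $D_1 \trc\LRb D_2$ — indeed a single pair of $\ssim$-steps suffices. So no real work is needed there.

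For the left-to-right direction, assume $D_1 \trc\LRb D_2$ with $D_1, D_2$ terminal. The conversion $D_1 \trc\LRb D_2$ is an arbitrary zig-zag of $\Rb$-steps. First I would invoke confluence of $\Det(\A)$: by the standard fact that confluence implies the Church–Rosser property (any conversion can be completed to a valley), there exists a common reduct $C$ with $D_1 \trc\Rb C \trc\Lb D_2$. Then I would apply Lemma~\ref{lem:terminal-evolution} twice: since $D_1$ is terminal and $D_1 \trc\Rb C$, we get $D_1 \ssim C$ (and $C$ terminal); likewise, since $D_2$ is terminal and $D_2 \trc\Rb C$, we get $D_2 \ssim C$. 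Composing via transitivity of $\ssim$ yields $D_1 \ssim D_2$, as required. This is essentially the same argument pattern as in the proof of Lemma~\ref{lem:cr-utd}.

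The only point requiring a little care is the first step — passing from an arbitrary conversion $\trc\LRb$ to a valley $D_1 \trc\Rb C \trc\Lb D_2$. This is exactly the implication $\CR \Rightarrow$ Church–Rosser, which holds for any confluent ARS and which the paper is implicitly relying on throughout (it is used in Lemma~\ref{lem:cr-utd} as well, where only a single peak appears); for a general conversion one proves it by induction on the number of zig-zags, closing each peak with confluence. I expect this to be the main (and only mild) obstacle, since everything else is a direct application of Lemma~\ref{lem:terminal-evolution} and the fact that $\ssim \subseteq \Rb$.
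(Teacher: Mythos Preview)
Your proposal is correct and matches the paper's own proof essentially verbatim: the paper dismisses the backward direction as trivial, then for the forward direction says ``From confluence (repeatedly), $D_1$ and $D_2$ must have a common reduct. The result then follows from Lemma~\ref{lem:terminal-evolution}.'' Your added remark about passing from an arbitrary conversion to a valley via the standard $\CR \Rightarrow$ Church--Rosser argument is exactly what the paper's ``(repeatedly)'' is gesturing at.
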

\begin{proof}
    The way back is trivial, so we detail the way forward.
    From confluence (repeatedly),
    $D_1$ and $D_2$ must have a common reduct.
    The result then follows from Lemma~\ref{lem:terminal-evolution}.
\end{proof}

Then, if $a$ and $b$ are distinct terminal elements,
it follows that ${\Rb}$-convertibility is a consistent theory
as $\Lopen (1,a) \Lclose \not\ssim \Lopen (1,b) \Lclose$.
Summarizing, in a confluent PARS, reasoning about equivalence of programs is simplified
and there is a strong consistency guarantee about convertibility, much like
in the classical case.

Readers familiar with rewriting modulo
equivalence\cite{JouannaudKirchnerPOPL84,PetersonStickelJACM81} may
wonder why we are not studying confluence modulo $\ssim$, or the stronger
Church-Rosser property modulo $\ssim$.
It turns out both of these are too strong for our purposes.
The following system:
\[
    a \Ra \Lopen (\sfrac12, a) \Lsep (\sfrac12, b) \Lclose
\]
should undoubtedly be considered confluent due to being deterministic;
but in it we can form the following diagram
\[
    \Lopen (\sfrac12, a) \Lsep (\sfrac12, b) \Lclose \Pbl
    \Lopen (1,a) \Lclose
    \ssim
    \Lopen (\sfrac13, a) \Lsep (\sfrac23, a) \Lclose
\]
which cannot be closed by $\trc\Pb \o \ssim \o \trc\Pbl$, due to a factor
of $3$ present in one side and not the other.
Thus this system is neither confluent modulo $\ssim$ nor $\CR$ modulo
$\ssim$, yet it is clearly distribution confluent.
We therefore study the strictly weaker distribution
confluence, which is strong enough for our purposes and easier to reason
about.

\section{Proving Distribution Confluence}\label{sec:Confluence}
\subsection{Introduction}

In the previous section, we introduced our definition of confluence and
argued about its correctness.
For it to be useful in practice, it should also be amenable to be proven. In this
section we provide several simplified criteria for this task, obtaining
analogues to the most usual methods for proving classical confluence.

Since distribution confluence is simply the classical confluence
of ${\Rb}$, all existing classical criteria (\eg~the diamond property or
Newman's lemma) are valid without modification. However, they are not very
useful.
One issue is that one needs to consider all distributions (instead of single elements) and
the presence of equivalence steps in the peaks.
Also, $\Det(\A)$ is never strongly (or even weakly) normalising,
so Newman's lemma is useless here.
With respect to the diamond
property, consider a system with $a \Ra D$, then the following reductions are
possible:
\[
    \Lopen (1,a) \Lclose
    \Lb
        \Lopen (\sfrac12, a) \Lsep (\sfrac12,a) \Lclose
    \Rb
        \sfrac12 D \app \Lopen (\sfrac12, a) \Lclose
\]
and these two distributions cannot in general be joined in a single step,
even if $a$ has no other successor distribution.

Thus, a priori, it seems as if distribution confluence is even harder
to prove than classical confluence.
To relieve that, we shall prove various syntactic lemmas about the
${\Rb}$ relation, allowing us to decompose it into more manageable
forms.
We then show how we can limit our reasoning to Dirac distributions,
ignore equivalence steps in the peaks and allow to use them freely in
the valleys.
Lastly, we carry over classical criteria for confluence into our
setting, such as the aforementioned diamond property and Newman's lemma.

\subsection{Decomposing the \texorpdfstring{${\Rb}$}{->>} relation}

Since both ${\Pb}$ and ${\ssim}$ are reflexive, $({\Pb} \cup {\ssim})^*$ coincides with $({\Pb}
\rmod{\ssim})^*$.
Thus, since confluence is a property over the reflexive-transitive closure
of a relation, it suffices to study the confluence of ${\Pb} \rmod {\ssim}$, where
equivalence steps do not have a ``cost'', but are pervasive (as in rewriting modulo equivalence).%
%

Given the precise syntactic definition for both relations, we can prove by
analysing the reductions that any step of ${\Pb} \rmod {\ssim}$ can be made by
first splitting, then evolving, and then joining back elements, as
Lemma~\ref{lem:decomp-1} states.
We first introduce the following notion of commutation.%
    \footnote{%
    Note that this is not the usual
    notion of commuting relations,
    defined as $S^{-1} \o R \subseteq R \o S^{-1}$,
    which is a symmetric property and could be described as ``parallel''.}

\noindent
\begin{minipage}[l]{0.86\textwidth}
\begin{defn}[Sequential commutation]
    We say that a relation $R$ ``commutes over'' $S$ when $S \o R
    \subseteq R \o S$, and note it as $R \comm S$.
    The property can be expressed by the diagram on the right.
    Intuitively, it means that $R$ can be ``pushed'' before $S$.
\end{defn}
\end{minipage}
\begin{minipage}[r]{0.13\textwidth}
        \scalebox{0.7}{
            \begin{tikzpicture}[baseline=0pt,yscale=0.8]
                \draw(0,0)    node (a) [] {.};
                    \draw(-1,-1.5)  node (b) [] {.};
                    \draw( 1,-1.5)  node (c) [] {.};
                    \draw( 0,-3)  node (d) [] {.};

                    \draw[->,left ]  (a) edge node {$S$} (b);
                    \draw[->,right,dashed]  (a) edge node {$R$} (c);
                    \draw[->,left]  (b) edge node {$R$} (d);
                    \draw[->,right,dashed]  (c) edge node {$S$} (d);
            \end{tikzpicture}}
\end{minipage}

A key property of sequential commutation is that if $R \comm S$,
then $(R \cup S)^* = R^* \o S^*$. It is also preserved when taking
the $n$-fold composition (\ie~``$n$ steps'') or reflexive-transitive closures on each side.
We now prove some commutations relating evolution and equivalence
steps (the last one needs some ``administrative'' steps).

\begin{lemma}\label{lem:comm-1}
    We have ${\Pb} \comm (FJ)^*$; $S \comm (FJ)^*$ and
    ${\Pb} \o S \subseteq S \o {\Pb} \o (FJ)^*$.
\end{lemma}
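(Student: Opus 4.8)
The plan is to prove the three commutations in the order stated, since the third builds on the ideas of the first two. All three are proven by induction on the structure of the distributions (i.e.\ on the lists), analysing how a $\Pb$-step and an equivalence step can be made to overlap.

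First I would prove ${\Pb} \comm (FJ)^*$. It suffices, by the remarks on sequential commutation (preservation under reflexive-transitive closure on the right) and the fact that $(FJ)^*$ is generated by single \rn{Flip} and \rn{Join} applications inside a context, to show ${\Pb} \comm (FJ)$ for a single congruence step; actually it is cleanest to show $(FJ) \o {\Pb} \subseteq {\Pb} \o (FJ)^*$ directly by induction on the derivation of the $\Pb$-step, with a case split on whether the equivalence step touches the point(s) being evolved. The key local observations are: (i) a \rn{Flip} of two points commutes with evolving either, or a disjoint, point, producing a \rn{Flip} afterwards; (ii) if we \rn{Join} two points $(p,a)\Lsep(q,a)$ into $(p{+}q,a)$ and then evolve $a$ via $a \Ra A$, this equals first evolving both copies (giving $pA \app qA$) and then recombining — which is exactly where we may need several \rn{Join} steps at the end, hence the $(FJ)^*$ on the right rather than a single step; (iii) conversely, evolving $(p{+}q,a)$ to $(p{+}q)A$ and then needing to split is \emph{not} an $(FJ)$ phenomenon, which is why $S$ is treated separately. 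The claim $S \comm (FJ)^*$ is simpler and purely combinatorial: splitting a point and then flipping/joining can always be rearranged to flip/join first (possibly on fewer points) and split afterwards; again an induction on the context with the obvious base cases.

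For the third statement, ${\Pb} \o S \subseteq S \o {\Pb} \o (FJ)^*$, the intuition is that a split occurring \emph{after} an evolution can be pulled to the front at the cost of some administrative \rn{Flip}/\rn{Join} steps at the end. I would argue by induction on the $\Pb$-derivation. The interesting case is when the split acts on a point $(r,c)$ produced by the evolution of some $(p,a)$ with $a \Ra A$, i.e.\ $(r,c)$ is one of the points of $pA$. Splitting $(r,c)$ into $(r_1,c)\Lsep(r_2,c)$ inside $pA\app ds'$ can be realised by first splitting $(p,a)$ into $(p_1,a)\Lsep(p_2,a)$ with $p_1 A \app p_2 A$ containing $(p_1 r/p\,,c)$ and $(p_2 r/p\,,c)$ summing to $r$ — but these weights need not be the desired $r_1,r_2$, so one instead splits $(p,a)$ finely enough and then uses \rn{Join} steps to reassemble the other points of $A$ that got duplicated, and possibly \rn{Flip} to reorder. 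A cleaner route: split $(p,a)$ into as many copies as needed so that evolving each copy and then \rn{Join}-ing yields exactly $(r_1,c)\Lsep(r_2,c)\app(\text{rest of }pA)$; the ``rest'' is recovered entirely by \rn{Join}, staying within $(FJ)^*$. The case where the split acts on a point not produced by the evolution, or acts on the very point $(p,a)$ before it is evolved, is handled directly by the induction hypothesis together with the second claim.

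The main obstacle I expect is bookkeeping in the third statement: making precise the "split finely, then join back" manoeuvre so that the leftover reassembly genuinely lies in $(FJ)^*$ and not in the full $\sim$ (in particular, that no further \emph{splitting} is needed at the end), and handling the interleaving when the $\Pb$-step performs several evolutions along the list while the single $S$-step sits at one position. This is exactly the kind of list-surgery the authors warn is error-prone, so I would set it up with a careful induction on the list structure underlying the $\Pb$-derivation, peeling off the head point and invoking the induction hypothesis on the tail, with the two earlier commutations available to normalise the administrative steps.
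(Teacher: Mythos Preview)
Your approach is correct and is exactly what the paper does; its proof is the single line ``By induction on the shape of the reductions.'' One small simplification for the third commutation: you may choose the split weights freely, so taking $p_1 = r_1 p/r$ and $p_2 = r_2 p/r$ makes the two resulting copies of $c$ carry precisely $r_1$ and $r_2$, after which the duplicated remaining points of $A$ are reassembled purely by \rn{Flip}/\rn{Join}; the detour through ``splitting finely enough'' is unnecessary (and would use $S^*$ rather than the single $S$ the statement allows).
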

\begin{proof}
    By induction on the shape of the reductions.
\end{proof}
\begin{lemma}
    \label{lem:decomp-1}
    The relations $({\Pb} \rmod {\ssim})$ and $S^* \o {\Pb} \o (FJ)^*$
    coincide.
\end{lemma}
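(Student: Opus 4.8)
The plan is to prove the two inclusions separately, using the commutation facts from Lemma~\ref{lem:comm-1} to rearrange steps into the canonical order.

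\textbf{The easy direction} is $S^* \o {\Pb} \o (FJ)^* \subseteq ({\Pb} \rmod {\ssim})$. Since $S \subseteq {\ssim}$ and $(FJ) \subseteq {\ssim}$, and $\ssim$ is by definition the reflexive-transitive closure of $\sim$, we have $S^* \subseteq {\ssim}$ and $(FJ)^* \subseteq {\ssim}$. Hence $S^* \o {\Pb} \o (FJ)^* \subseteq {\ssim} \o {\Pb} \o {\ssim} = {\Pb} \rmod {\ssim}$ by the definition of rewriting modulo.

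\textbf{The harder direction} is $({\Pb} \rmod {\ssim}) \subseteq S^* \o {\Pb} \o (FJ)^*$, i.e.\ given $D \mathrel{\ssim} D_1 \mathrel{\Pb} D_2 \mathrel{\ssim} E$, I must produce a factorisation $D \mathrel{S^*} \o \mathrel{\Pb} \o \mathrel{(FJ)^*} E$. First I would decompose each $\ssim$-segment: since $\sim$ is generated by \rn{Flip}, \rn{Join}, \rn{Split}, a single $\ssim$-step is a finite sequence of these, so $D \mathrel{\ssim} D_1$ can be written as a composition of $S$-steps and $(FJ)$-steps in some order, and likewise for $D_2 \mathrel{\ssim} E$. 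The goal is to push all $S$ (splitting) steps to the far left and all $(FJ)$ (flip/join) steps to the far right, with the single $\Pb$ step surviving in the middle. Concretely I would argue:

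\begin{itemize}
\end{itemize}
Working from the middle outward, on the right side I have a block of mixed $S$ and $(FJ)$ steps following $\Pb$; using $S \comm (FJ)^*$ from Lemma~\ref{lem:comm-1} I can float every $S$-step leftwards past the $(FJ)$-steps until it is adjacent to $\Pb$. Then I have $\cdots \Pb \o S^* \o (FJ)^*$ on the right. Now I invoke ${\Pb} \o S \subseteq S \o {\Pb} \o (FJ)^*$ (again Lemma~\ref{lem:comm-1}), applied repeatedly, to move those $S$-steps to the \emph{left} of $\Pb$, at the cost of generating extra $(FJ)^*$ steps on the right — which simply merge with the $(FJ)^*$ block already there. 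On the left side, the block of mixed $S$ and $(FJ)$ steps preceding $\Pb$ is handled by ${\Pb} \comm (FJ)^*$: each $(FJ)$-step there can be pushed rightwards past $\Pb$, landing in the right-hand $(FJ)^*$ block, leaving only $S$-steps on the left. After finitely many such rearrangements (termination is clear since each move strictly reduces, e.g., the number of $(FJ)$-steps appearing before the $\Pb$ step plus the number of $S$-steps appearing after it), everything is in the shape $S^* \o {\Pb} \o (FJ)^*$.

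\textbf{Main obstacle.} The delicate point is bookkeeping the rewriting-modulo interleaving correctly: $\ssim$ may wrap the $\Pb$ step with $S$- and $(FJ)$-steps on \emph{both} sides and in arbitrary order, so one must be careful that the rearrangement procedure actually terminates and that the three commutation facts of Lemma~\ref{lem:comm-1} suffice to cover every adjacency that arises (in particular the asymmetry: $\Pb$ commutes \emph{over} $(FJ)^*$ but interacts with $S$ only via the weaker $\Pb \o S \subseteq S \o \Pb \o (FJ)^*$, so $S$-steps can only be moved in one direction relative to $\Pb$, which is fortunately the direction we need). I would set up a measure on the string of $\{S, (FJ), \Pb\}$-labels — lexicographically, (number of $(FJ)$-occurrences left of the unique $\Pb$, then number of $S$-occurrences right of it) — and check each of the three rewrites strictly decreases it while producing only legal adjacencies, so the normal form is exactly $S^* \o \Pb \o (FJ)^*$.
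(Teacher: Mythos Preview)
Your proposal is correct and follows essentially the same approach as the paper: both directions are handled identically, and the nontrivial inclusion is obtained by using exactly the three commutation facts of Lemma~\ref{lem:comm-1} to push $S$-steps leftwards and $(FJ)$-steps rightwards across the single $\Pb$. The only difference is organizational: the paper first uses $S \comm (FJ)^*$ to normalise each $\ssim$-block once and for all to the shape $S^* \o (FJ)^*$, so that the starting point is already $S^* \o (FJ)^* \o {\Pb} \o S^* \o (FJ)^*$ and the remaining two commutations finish the job directly---this sidesteps your termination-measure argument, which is sound but unnecessary.
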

\begin{proof}
    The backwards inclusion is trivial,
    so we detail only the forward direction.
    By making use of the second commutation in Lemma~\ref{lem:comm-1}
    we get that ${\ssim} = S^* \o (FJ)^*$.
    Thus, we need to show
    $S^* \o (FJ)^* \o {\Pb} \o S^* \o (FJ)^* \subseteq S^* \o {\Pb} \o (FJ)^*$.
    The proof then proceeds by using the other two commutations to
    reorder the relations.
\end{proof}

Further, this equivalence extends to $n$-fold compositions and therefore to the
reflexive-transitive closure.

\begin{lemma}
    \label{lem:decomp-n}
    The relations $({\Pb} \rmod {\ssim})^n$ and $S^* \o {\Pb^n} \o (FJ)^*$
    coincide.
\end{lemma}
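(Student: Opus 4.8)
The plan is to prove the statement by induction on $n$, using Lemma~\ref{lem:decomp-1} as the base case and the commutations of Lemma~\ref{lem:comm-1} together with the preservation of sequential commutation under composition and closure for the inductive step. First I would dispatch the backwards inclusion $S^* \o {\Pb^n} \o (FJ)^* \subseteq ({\Pb} \rmod {\ssim})^n$: since ${\ssim}$ contains both $S$ and $(FJ)$ (indeed $S, (FJ) \subseteq {\ssim}$ and $\ssim$ is transitive and reflexive), and since ${\Pb} \subseteq ({\Pb} \rmod {\ssim})$, a term of the form $S^* \o {\Pb^n} \o (FJ)^*$ can be read as $n$ blocks of ${\Pb} \rmod {\ssim}$ by distributing the leading $S^*$ and trailing $(FJ)^*$ into the first and last block respectively and inserting trivial (reflexive) equivalence steps between consecutive $\Pb$'s.

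For the forward inclusion, the base case $n = 1$ is exactly Lemma~\ref{lem:decomp-1}. For the inductive step I would write $({\Pb} \rmod {\ssim})^{n+1} = ({\Pb} \rmod {\ssim})^n \o ({\Pb} \rmod {\ssim})$ and apply the induction hypothesis to the first factor and Lemma~\ref{lem:decomp-1} to the second, obtaining an inclusion into $S^* \o {\Pb^n} \o (FJ)^* \o S^* \o {\Pb} \o (FJ)^*$. It then remains to collapse the middle $(FJ)^* \o S^*$ and migrate the $S$-steps leftward past ${\Pb^n}$, exactly as in the proof of Lemma~\ref{lem:decomp-1} but now with a block of $n$ consecutive $\Pb$-steps rather than one. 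Concretely, I would use ${\Pb} \comm (FJ)^*$ (hence ${\Pb^n} \comm (FJ)^*$ by preservation under $n$-fold composition) to push $(FJ)^*$ past ${\Pb^n}$; then use $S \comm (FJ)^*$ (hence $S^* \comm (FJ)^*$) and the mixed commutation ${\Pb} \o S \subseteq S \o {\Pb} \o (FJ)^*$ — which lifts to ${\Pb^n} \o S^* \subseteq S^* \o {\Pb^n} \o (FJ)^*$ by an inner induction on the number of $\Pb$- and $S$-steps — to move the freed $S^*$ block to the front, absorbing the newly generated $(FJ)^*$ factors on the right. After finitely many such reorderings every $S$-step sits at the front, every $(FJ)$-step at the back, and the $\Pb$-steps coalesce into ${\Pb^{n+1}}$ in the middle, yielding the desired $S^* \o {\Pb^{n+1}} \o (FJ)^*$.

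The main obstacle I anticipate is the bookkeeping in lifting the single-step mixed commutation ${\Pb} \o S \subseteq S \o {\Pb} \o (FJ)^*$ to the form ${\Pb^n} \o S^* \subseteq S^* \o {\Pb^n} \o (FJ)^*$: each time an $S$-step is commuted left past one $\Pb$-step it spawns a fresh $(FJ)^*$ tail, and one must check that these tails can be gathered at the right end without interfering with the remaining $S$- or $\Pb$-steps — this is where ${\Pb} \comm (FJ)^*$ and $S \comm (FJ)^*$ are needed again, and where a naive induction might loop. I would phrase this as a standalone sublemma (``${\Pb^n} \o S^* \subseteq S^* \o {\Pb^n} \o (FJ)^*$'') proved by induction on $n$ with the $n=1$ case itself proved by induction on the length of the $S^*$-sequence, so that the main induction on the number of $\rmod$-steps then goes through cleanly. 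Everything else is a routine application of the ``$R \comm S$ implies $(R \cup S)^* = R^* \o S^*$ and is preserved by composition/closure'' principle already invoked in the text.
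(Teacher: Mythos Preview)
Your proposal is correct and follows essentially the same approach as the paper, which merely says ``by induction on $n$, and using the previous lemma and commutations.'' You have simply spelled out the details the paper omits, in particular isolating the sublemma ${\Pb^n} \o S^* \subseteq S^* \o {\Pb^n} \o (FJ)^*$ that makes the reordering in the inductive step go through.
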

\begin{proof}
    By induction on $n$, and using the previous lemma and commutations.
\end{proof}

\subsection{Simplifying diagrams}
With the previous decompositions, we can now prove a very generic result about
diagram simplification with a specific root $D$, which then easily generalizes
to the whole system.

\noindent
\begin{minipage}[l]{0.86\textwidth}
\begin{defn}
    We say a pair of relations $(\gamma, \delta)$ ``closes'' another
    pair $(\alpha, \beta)$ ``on $a$''
    if whenever $b \la_\alpha a \ra_\beta c$
    then there exists $d$ such that $b \ra_\gamma d \la_\delta c$.
    The diagram for the property can be seen on the right.
    When this occurs for all $a$, we simply say ``$(\gamma, \delta)$ closes $(\alpha, \beta)$''.
    Note that ${\ra}$ is confluent precisely when $({\trc\ra}, {\trc\ra})$ closes $({\trc\ra}, {\trc\ra})$.
\end{defn}
\end{minipage}
\begin{minipage}[r]{0.13\textwidth}
    \scalebox{0.7}{
      \begin{tikzpicture}[baseline=0pt]
        \draw(0,0)    node (a) [] {$a$};
        \draw(-1,-1.5)  node (b) [] {$b$};
        \draw( 1,-1.5)  node (c) [] {$c$};
        \draw( 0,-3)  node (d) [] {$d$};

        \draw[->,left ]  (a) edge node {$\alpha$} (b);
        \draw[->,right]  (a) edge node {$\beta$} (c);
        \draw[->,left, dashed]  (b) edge node {$\gamma$} (d);
        \draw[->,right,dashed]  (c) edge node {$\delta$} (d);
      \end{tikzpicture}
    }
\end{minipage}

\begin{defn}
    We call a relation $\ra$ ``local'',
    when if $\alpha D_1 \app \beta D_2 \ra E$, then there exist
    $E_1, E_2$ such that $E = \alpha E_1 \app \beta E_2$
    and $D_i \ra E_i$.
    (Note that ${\Pb}$ and $S$ are local).
\end{defn}

\begin{thm}
    \label{thm:decomp}
    Let $\alpha, \beta$ be local relations
    and $\gamma, \delta$ compositional relations.
    If $(\gamma \rmod {\ssim}, \delta \rmod {\ssim})$
    closes $(\alpha, \beta)$ for the Dirac distributions of $D$,
    then $(\gamma \rmod {\ssim}, \delta \rmod {\ssim})$
    closes $(S^* \o \alpha \o (FJ)^*, S^* \o \beta \o (FJ)^*)$ for $D$.
\end{thm}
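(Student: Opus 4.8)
The plan is to reduce the general statement about the composite peak $S^* \o \alpha \o (FJ)^*$ versus $S^* \o \beta \o (FJ)^*$ to the assumed fact about bare $\alpha$ and $\beta$ on Dirac distributions, by using locality to split a distribution into its points and compositionality to reassemble the closing diagrams. Concretely, suppose $B \la D' \la_\alpha D \ra_\beta D'' \ra C$ where the two outer legs are the $S^*$ (resp.\ $(FJ)^*$) steps and the inner legs are $\alpha$ and $\beta$. First I would observe that $S^*$ and $(FJ)^*$ are both subsumed by $\ssim$, so these outer steps can be absorbed into the $\rmod{\ssim}$ that already decorates $\gamma$ and $\delta$ in the conclusion; this means it suffices to close the bare peak $B' \la_\alpha D \ra_\beta B''$ by $(\gamma \rmod \ssim, \delta \rmod \ssim)$ for \emph{arbitrary} $D$ whose elements are all elements of $D$ (i.e.\ the support condition is inherited). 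So the real work is to bootstrap from the Dirac case to the general case of a multi-point $D$.

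For that bootstrapping step, I would write $D = \Lopen (p_i, a_i)\Lclose_i = \bigoplus_i p_i \Lopen (1, a_i)\Lclose$ (writing the concatenation as a sum), and use locality of $\alpha$: since $\alpha$ is local, $D \ra_\alpha B'$ forces $B' = \bigoplus_i p_i B'_i$ with $\Lopen (1,a_i)\Lclose \ra_\alpha B'_i$; similarly $D \ra_\beta B''$ gives $B'' = \bigoplus_i p_i B''_i$ with $\Lopen(1,a_i)\Lclose \ra_\beta B''_i$. (Locality as defined handles splitting off one summand; iterating it — formally, an easy induction on the length of the list — handles the full decomposition.) Now each $a_i$ is a Dirac distribution of $D$, so the hypothesis gives $C_i$ with $B'_i \ra_{\gamma\rmod\ssim} C_i \la_{\delta\rmod\ssim} B''_i$. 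Then I would invoke compositionality of $\gamma$ and $\delta$ — and the fact that $\rmod\ssim$ of a compositional relation is still compositional, since $\ssim$ itself respects scaling and concatenation — to assemble $C := \bigoplus_i p_i C_i$, obtaining $B' = \bigoplus p_i B'_i \ra_{\gamma\rmod\ssim} \bigoplus p_i C_i = C$ and symmetrically $B'' \ra_{\delta\rmod\ssim} C$. This closes the bare peak, and combining with the first paragraph's observation (absorbing the leftover $S^*$ and $(FJ)^*$ legs into the ambient $\ssim$) closes the original composite peak.

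The main obstacle I anticipate is the bookkeeping around locality and the $\rmod\ssim$ wrapper rather than any deep idea. One has to be careful that ``local'' as stated only peels off a single $\alpha D_1 \app \beta D_2$ split, so the full decomposition of an $n$-point distribution needs a short induction, and one must check the base case (a single point, where $D = \alpha D_1 \app \beta D_2$ with, say, $D_2 = \nil$) lines up with the Dirac hypothesis. Likewise one must verify the small lemma that a compositional relation stays compositional after composing with $\ssim$ on both sides — this uses that $\ssim$ (indeed $\trc\ssim$) is compositional, which follows because all three generating rules \rn{Flip}, \rn{Join}, \rn{Split} are stable under uniform scaling and under embedding into a larger concatenation. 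A secondary subtlety is making sure the ``elements of $D$'' side-condition in the Dirac hypothesis is actually met: since every $a_i$ occurring in the decomposition is literally an element appearing in $D$, this is immediate, but it is worth stating explicitly so the application of the hypothesis is justified.
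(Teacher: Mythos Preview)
Your reduction to a ``bare peak $B' \la_\alpha D \ra_\beta B''$'' misreads the shape of the composite. In $S^* \o \alpha \o (FJ)^*$ the $S^*$ leg is the one \emph{adjacent to the peak vertex}: each branch out of $D$ has the form $D \ra_{S^*} D_1 \ra_\alpha D_2 \ra_{(FJ)^*} B$, and symmetrically $D \ra_{S^*} D_3 \ra_\beta D_4 \ra_{(FJ)^*} C$ on the right. The trailing $(FJ)^*$ can indeed be reverted and absorbed into the valley's $\rmod\ssim$, as you say; but the leading $S^*$ cannot, because $D_1$ and $D_3$ are in general \emph{different} splittings of $D$, so there is no single vertex at which a bare $(\alpha,\beta)$ peak sits, and your second paragraph's decomposition of the $\alpha$-step along the points of $D$ never applies---the $\alpha$-step starts at $D_1$, not at $D$.

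This bites even when $D$ is already Dirac. From $\Lopen(1,a)\Lclose$ the remaining $S^*$ may produce $\Lopen(p_1,a)\Lsep\ldots\Lsep(p_n,a)\Lclose$, and locality of $\alpha$ then gives a left branch of the form $p_1 D_1 \app \cdots \app p_n D_n$ where the $D_i$ are possibly \emph{distinct} $\alpha$-reducts of $a$; the right branch likewise yields $q_1 E_1 \app \cdots \app q_m E_m$ with $\beta$-reducts $E_j$. The hypothesis only supplies a closer $C_{i,j}$ for each \emph{pair} $(D_i,E_j)$, and these must be assembled into the matrix $p_1 q_1 C_{1,1} \app \cdots \app p_n q_m C_{n,m}$, reached from either branch after a further $\ssim$-step that refines the weights into the products $p_i q_j$. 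Your diagonal closing $p_1 C_1 \app \cdots \app p_n C_n$ presupposes that both branches share the same index set and weights, which is exactly what the leading $S^*$ breaks. The paper's proof keeps $S^* \o \alpha$ and $S^* \o \beta$ together as local relations, decomposes $D$ into Diracs along those, and then performs precisely this matrix closing; your argument is missing that ingredient.
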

\begin{proof}
    We give a sketch of the proof, more details can be found in \cite{Martinez17}.
    We need to close
    $(S^* \o \alpha \o (FJ)^*, S^* \o \beta \o (FJ)^*)$.
    First, note that closing $(S^* \o \alpha, S^* \o \beta^*)$ is enough since we can
    revert the $(FJ)^*$ steps with $(FS)^*$ steps.
    Now, since $\alpha$ and $\beta$ are local, $S^* \o \alpha$ and $S^* \o \beta$ are as well.
    Thus, we can limit ourselves to closing the Dirac distributions of $D$,
    and combine the reductions since $\gamma, \delta$ are compositional.
    We now need to close $(S^* \o \alpha, S^* \o \beta)$ when starting from some $\Lopen(1,a)\Lclose$.
    Note that the left (right) branch is then of the form
    $p_1 D_1 \app \dots \app p_n D_n$
    ($q_1 E_1 \app \dots \app q_m E_m$), where $a$
    reduces via $\alpha$ ($\beta$) to each $D_i$ ($E_j$).
    We can apply our hypothesis to get a $C_{i,j}$ closing each $D_i, E_j$.
    By first splitting each branch appropriately, we can close them
    in $p_1 q_1 C_{1,1} \app \ldots \app p_1 q_m C_{1,m} \app \ldots \app p_n q_m C_{n,m}$.
\end{proof}

From this theorem, we get as corollaries several simplified criteria for
confluence,
applicable at the level
of a particular distribution or to the whole system.

\begin{crit}[Dirac confluence]
    \label{cor:elementVsDist}
    If for every element $a$ of $D$ and distributions $E, F$ such that
    $E \trc\Pbl \Lopen (1,a) \Lclose \trc\Pb F$ there is a $C$
    such that
    $E \trc\Rb C \trc\Lb F$, then $D$ is confluent.
\end{crit}
\begin{proof}
    A corollary of Theorem~\ref{thm:decomp}, taking
    $\alpha = \beta = \gamma = \delta = {\trc\Pb}$.
\end{proof}

\begin{crit}[Semi-confluence]
    \label{crit:semi}
    If for every element $a$ of $D$ and distributions $E, F$ such that
    $a \Ra E$ and $\Lopen (1,a) \Lclose \trc\Pb F$ there is a $C$
    such that
    $E \trc\Rb C \trc\Lb F$, then $D$ is semi-confluent for ${\Pb} \rmod {\ssim}$.
\end{crit}
\begin{proof}
    A corollary of Theorem~\ref{thm:decomp}, taking
    $\alpha = {\Pb}$ and $\beta = \gamma = \delta = {\trc\Pb}$.
\end{proof}

\begin{crit}[Diamond property]
    \label{crit:diam}
    If for every element $a$ of $D$ and distributions $E, F$ such that
    $E \La a \Ra F$ there is a $C$
    such that
    $E \Rb_{P \rmod {\ssim}} C \Lb_{P \rmod {\ssim}} F$, then $D$ has the diamond property for ${\Pb} \rmod {\ssim}$.
\end{crit}
\begin{proof}
    A corollary of Theorem~\ref{thm:decomp}, taking
    $\alpha = \beta = \gamma = \delta = {\Pb}$.
\end{proof}

Note that in all these criteria, we need not consider
any equivalence in the peak, and can use them freely in the valley, both before and after evolving.
Also, proving any of these criteria for every element $a$ entails the
confluence of the system.

Another common tool for proving confluence is switching
the relation to another one with equal reflexive-transitive closure (and thus an
equivalent confluence) but which might be easier to analyse.
For distribution confluence, a similar switch is allowed,
slightly simplified by Lemma~\ref{lem:helper}.
\begin{defn}
  Given ${\Ra_1}$ and ${\Ra_2}$ over the same set $A$,
  if for every $a \Ra_1 D$, we have $\Lopen (1,a) \Lclose \trc\Rb_2 D$ we
  say that ${\Ra_1}$ is \emph{simulated by} ${\Ra_2}$.
\end{defn}
\begin{lemma}\label{lem:helper}
    If ${\Ra_1}$ is simulated by ${\Ra_2}$, then ${\Rb_1} \subseteq {\trc\Rb_2}$.
    If both relations simulate each other, then ${\trc\Rb_1} = {\trc\Rb_2}$,
    and their confluences are equivalent.
\end{lemma}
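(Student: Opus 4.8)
The plan is to reduce everything to the single containment ${\Pb_1} \subseteq {\trc\Rb_2}$ and then read off all three assertions from it. Writing ${\Rb_1} = ({\Pb_1} \cup {\ssim})$, the equivalence component needs no work: $\ssim$ is defined independently of the pointwise relation, so ${\ssim} \subseteq {\Rb_2} \subseteq {\trc\Rb_2}$ at once. Hence the first claim, ${\Rb_1} \subseteq {\trc\Rb_2}$, will follow as soon as proper evolution steps are shown to be simulable.

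To obtain ${\Pb_1} \subseteq {\trc\Rb_2}$ I would first record that ${\trc\Rb_2}$ is compositional. Both ${\Pb_2}$ (as already used in Lemma~\ref{lem:tree-sim}) and ${\ssim}$ are compositional and reflexive, so a single ${\Rb_2}$ step on one summand together with a reflexive move on the other is compositional, and interleaving this along two ${\trc\Rb_2}$-reductions lifts the property to the closure. Then, given $D \Pb_1 D'$, I would decompose $D = \Lopen (p_i, a_i) \Lclose_i$ as $p_1 \Lopen (1,a_1) \Lclose \app \cdots \app p_n \Lopen (1,a_n) \Lclose$; by inspection of the rules for ${\Pb_1}$, each point is either left untouched or rewritten, via some $a_i \Ra_1 A_i$, to $p_i A_i$. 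In the first case $\Lopen (1,a_i) \Lclose \trc\Rb_2 \Lopen (1,a_i) \Lclose$ by reflexivity; in the second $\Lopen (1,a_i) \Lclose \trc\Rb_2 A_i$ is exactly the simulation hypothesis. Recombining these $n$ reductions using compositionality of ${\trc\Rb_2}$ yields $D \trc\Rb_2 D'$. (Equivalently, one can run this as an induction on the derivation of $D \Pb_1 D'$, applying the two-summand form of compositionality at the recursive step.)

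For the remaining assertions, symmetric simulation gives ${\Rb_1} \subseteq {\trc\Rb_2}$ and ${\Rb_2} \subseteq {\trc\Rb_1}$; taking reflexive-transitive closures of both inclusions and using monotonicity and idempotence of the closure gives ${\trc\Rb_1} = {\trc\Rb_2}$. Since classical confluence of a relation depends only on its reflexive-transitive closure, $\Det(\A_1)$ is confluent if and only if $\Det(\A_2)$ is, which is the claimed equivalence of confluences.

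I expect the only genuine obstacle to be compositionality of ${\trc\Rb_2}$: it is intuitively evident but needs a short argument, because a single ${\Rb_2}$ step acts on only one of the two summands at a time, so one has to interleave the reductions and invoke reflexivity of ${\Pb_2}$ and ${\ssim}$. Everything after that is routine bookkeeping with list distributions.
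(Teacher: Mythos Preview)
Your proposal is correct and takes essentially the same approach as the paper: a case analysis on whether the $\Rb_1$ step is a $\ssim$ step (trivial) or a $\Pb_1$ step (handled by decomposing into pointwise rewrites and invoking the simulation hypothesis), with the second part following immediately from the first by symmetry and closure. Your version is considerably more explicit than the paper's one-line sketch, in particular in isolating compositionality of $\trc\Rb_2$ as the only nontrivial ingredient, but the underlying argument is the same.
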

\begin{proof}
    The first part follows by case analysis on the reduction ${\Rb_1}$.
    The second part is trivial.
\end{proof}

\subsection{Newman's lemma}

Newman's lemma~\cite{Newman} states that, for a strongly normalising system,
local confluence and confluence are equivalent,
yet we have remarked previously that ${\Pb}$ is never a strongly normalising relation.
To get an analogue to Newman's lemma, we thus provide a specialized notion of strong normalisation.

\begin{defn}
    \label{def:chain}
    A infinite sequence $D_i$ such that
    $D_1 \ra D_2 \ra D_3 \ra \cdots$
    is called an ``infinite ${\ra}$-chain'' (of root $D_1$).
\end{defn}

\begin{defn}[SN]
  We call a distribution $D$ ``strongly normalising'' when there is no
  infinite ${\Pb^1}$-chain
  of root $D$.%
  \footnote{
      Note that infinite $({\Pb^1} \rmod {\ssim})$-chains always
      exist because of partial evolution.}
  We call a PARS strongly normalising when every distribution
  is strongly normalising.
\end{defn}

There are indeed systems which do satisfy this
requirement, and it is intuitively what one would expect. Now a
probabilistic analogue to Newman's lemma can be obtained, following
a proof style very similar to that of \cite{Huet}.

\begin{defn}[LC]
  We say that a distribution $D$ is ``locally confluent'' when $E
  \Pbl^1 D \Pb^1 F$ implies that there exists $C$ such that $E \trc\Rb C \trc\Lb F$.
\end{defn}

Note that strong normalisation over Dirac distributions implies it for all distributions,
and likewise for local confluence.

\begin{lemma}[Newman's]
    \label{lem:newman}
    If $\A \models \LCR$ and $\A \models \SN$, then $\A \models \CR$.
\end{lemma}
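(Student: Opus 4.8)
The plan is to mimic the classical well-founded-induction proof of Newman's lemma, but with two adaptations forced by our setting: the induction must be carried out over the relation $\Pb^1$ (the only thing we assume strongly normalising), and all the diagram chasing happens modulo $\ssim$, i.e.\ inside $\Det(\A)$ whose reduction is $\Rb = \Pb \cup \ssim$. By Lemma~\ref{lem:decomp-n} and the remarks following Theorem~\ref{thm:decomp} we may equivalently work with $\Pb \rmod \ssim$ and never need to track equivalence steps in the peaks. Also, by the observation that $\SN$ and $\LCR$ over Dirac distributions suffice, it would be enough to establish confluence starting from Dirac distributions; but it is cleaner to run the induction directly on an arbitrary strongly normalising $D$, since $\SN$ of $D$ is exactly the well-foundedness hypothesis we need.

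First I would set up the induction: since $\A \models \SN$, the relation $\Pb^1$ is well-founded (no infinite $\Pb^1$-chains), so I can do well-founded induction on a distribution $D$ with respect to $\trc{(\Pb^1)}$. The induction hypothesis is that every $D'$ with $D \Pb^1 D'$ (hence, by transitivity, every proper $\Pb^1$-descendant) is confluent. Given a peak $E \trc\Lb D \trc\Rb F$ in $\Det(\A)$, I first dispose of the degenerate cases: if either side uses no proper evolution step then that side is $\ssim$-equivalent to $D$ (by Lemma~\ref{lem:decomp-n} with $n=0$, the pure-equivalence case), and the peak closes trivially using $\ssim$ in the valley. Otherwise write $D \Pb^1 E_1 \trc\Rb E$ and $D \Pb^1 F_1 \trc\Rb F$, taking the first proper evolution step on each branch.

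Next comes the classical tiling argument. Apply local confluence ($\A \models \LCR$) to the peak $E_1 \Pbl^1 D \Pb^1 F_1$ to obtain $C_0$ with $E_1 \trc\Rb C_0 \trc\Lb F_1$. Now $E_1$ and $F_1$ are proper $\Pb^1$-descendants of $D$, so by the induction hypothesis both are confluent: from $E \trc\Lb E_1 \trc\Rb C_0$ we get a common reduct $C_1$ of $E$ and $C_0$, and from $C_0 \trc\Rb C_1$ together with $C_0 \trc\Lb F_1 \trc\Rb F$ and the confluence of $F_1$ we get a common reduct $C$ of $C_1$ and $F$; then $E \trc\Rb C_1 \trc\Rb C \trc\Lb F$, closing the original peak. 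Here I should be slightly careful that the two "descendant" confluences are applied to genuine peaks with root $E_1$ and $F_1$ respectively — which they are, since $C_0$ is reached from each of them — and that $\trc\Rb$ is transitive, which it is as a reflexive-transitive closure.

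The main obstacle I anticipate is not the combinatorial structure — that is verbatim Huet's proof — but the interaction with $\ssim$ and with $\Pb^1$ versus $\Pb$. Concretely: (i) making sure that "take the first proper evolution step" is legitimate, which is exactly what the decomposition $(\Pb \rmod \ssim)^n = S^* \o \Pb^n \o (FJ)^*$ of Lemma~\ref{lem:decomp-n} licenses, letting me absorb the leading/trailing equivalence noise into the valley; and (ii) checking that $\LCR$ as stated (a peak $E \Pbl^1 D \Pb^1 F$) really is the right hypothesis to feed the first tile, rather than needing local confluence of $\Rb$ — but since equivalence steps can always be pushed around by the commutations of Lemma~\ref{lem:comm-1}, a single-proper-step peak is all that is needed. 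Everything else is bookkeeping with transitivity and reflexivity of $\trc\Rb$, exactly as in \cite{Huet}.
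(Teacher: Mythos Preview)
Your overall architecture matches the paper's: well-founded induction on $\Pb^1$, followed by the Huet-style tiling with local confluence and the induction hypothesis applied to the two immediate successors. The tiling itself (your $C_0, C_1, C$) is exactly the paper's diagram.

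The gap is in your step~(i). Lemma~\ref{lem:decomp-n} does \emph{not} let you rewrite an arbitrary reduction $D \trc\Rb E$ containing a proper step as $D \Pb^1 E_1 \trc\Rb E$. What the decomposition gives is $D \mathrel{S^*} D' \mathrel{\Pb^n} E' \mathrel{(FJ)^*} E$: the first proper evolution is taken from a \emph{split} $D'$ of $D$, not from $D$ itself, and the two branches of your peak will in general split $D$ differently. These leading $S^*$ steps sit at the root of the peak, so they cannot be ``absorbed into the valley'' the way the trailing $(FJ)^*$ can. Concretely, take $a$ with $a \Ra A$ and $a \Ra B$: then $\Lopen(1,a)\Lclose \trc\Rb \sfrac12 A \app \sfrac12 B$ via a split followed by one $\Pb^1$ step, yet neither $\Pb^1$-successor of $\Lopen(1,a)\Lclose$ (namely $A$ or $B$) need reduce to $\sfrac12 A \app \sfrac12 B$. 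So your $E_1$ does not exist in general, and the induction hypothesis has nothing to grab onto.

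The paper sidesteps this by invoking Corollary~\ref{cor:elementVsDist} \emph{before} the tiling: to establish confluence it suffices to close pure $\trc\Pb$-peaks (equivalently $\Pb^{1*}$-peaks, since $\trc\Pb = \Pb^{1*}$), with no equivalence steps anywhere in the peak. That corollary --- really Theorem~\ref{thm:decomp} --- is precisely the common-refinement machinery that reconciles the two different leading splits. Once the peak is pure $\Pb^{1*}$, ``take the first proper step'' is literal, the successors $E'$ and $F'$ are genuine $\Pb^1$-successors of the root, the induction hypothesis applies, and your tiling goes through verbatim. So the fix is small: replace your appeal to Lemma~\ref{lem:decomp-n} by an appeal to Corollary~\ref{cor:elementVsDist}, and work with a $\Pb^{1*}$ peak from the start.
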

\setlength{\columnsep}{10pt}%
  \begin{wrapfigure}{r}{0pt}
    \scalebox{0.77}{
        \begin{tikzpicture}[yscale=0.7]
      \draw(0,0)    node (M) [] {$D$};
      \draw(-1,-1.5)  node (M1') [] {$E'$};
      \draw( 1,-1.5)  node (M2') [] {$F'$};
      \draw(-2,-3)  node (M1) [] {$E$};
      \draw( 2,-3)  node (M2) [] {$F$};
      \draw( 0,-3)  node (M') [] {$C$};
      \draw( 0,-1.5)  node (LC) [] {$\LCR$};
      \draw(-1,-3)  node (LC) [] {IH};
      \draw(0.5,-3.75)  node (LC) [] {IH};

      \draw(-1,-4.5)  node (C1) [] {$C'$};
      \draw( 0,-6)  node (C2) [] {$C''$};

      \draw[->,left]  (M)      edge node {\small$P^1$} (M1');
      \draw[->,right] (M)      edge node {\small$P^1$} (M2');
      \draw[->,left]  (M1')    edge node {\small$P^{1*}$} (M1);
      \draw[->,right] (M2')    edge node {\small$P^{1*}$} (M2);
      \draw[->>,dashed] (M1')  edge node[sloped,pos=.9,below=0] {*} (M');
      \draw[->>,dashed] (M2')  edge node[sloped,pos=.9,below=0] {*} (M');
      \draw[->>,dashed] (M1)   edge node[sloped,pos=.9,below=0] {*} (C1);
      \draw[->>,dashed] (M')   edge node[sloped,pos=.9,below=0] {*} (C1);
      \draw[->>,dashed] (M2)   edge node[sloped,pos=.95,below=0] {*} (C2);
      \draw[->>,dashed] (C1)   edge node[sloped,pos=.9,below=0] {*} (C2);
    \end{tikzpicture}
    }
  \end{wrapfigure}
\begin{proof}
    We shall prove, by well-founded induction over ${\Pb^1}$,
    that every distribution is confluent.
    For a given distribution, it suffices to show that
    that any
    peak of proper evolutions can be closed by ${\trc\Rb}$.
    Then, by Corollary~\ref{cor:elementVsDist} (and since ${\trc\Pb} = {\Pb^{1*}}$), confluence follows.
    We want to close a diagram of shape $E \Pbl^{1*} D \Pb^{1*} F$.
    If either of the branches is zero steps long,
    then we trivially conclude. If not, we can form the diagram on the right,
    completing the proof by local confluence and the induction hypotheses for $E'$ and $F'$.
\end{proof}

\section{Limit Distributions}
\label{sec:Limit}

In classical abstract rewriting, an element can either be non-normalising,
weakly normalising or strongly normalising (corresponding to the situations where
it \emph{will not}, \emph{may}, and \emph{will} normalise, respectively).
In probabilistic rewriting, the story is not as simple. Consider the
following PARS, where $b$ is a terminal element:
\[
    a \Ra \Lopen (\sfrac12, a) \Lsep (\sfrac12, b) \Lclose
\]

Is $a$ normalising? One could say ``no'' since, indeed, it does not have a finite maximal computation tree,
as there is always some probability for the system to be in the non-terminal $a$
state.
However, such a probability will be made arbitrarily small by taking
sufficient steps, and the distribution $\Lopen (1,b) \Lclose$ is reached
\emph{in the limit}.
In this case $a$ is called \emph{almost surely terminating}\cite{BournezGarnier}.
Certainly, a desirable fact is that almost-surely-terminating elements have a
unique final distribution.
We will prove that distribution confluence guarantees such uniqueness.

We first introduce a notion of distance between
\emph{mathematical distributions},
\ie~normalised functions of type $A \ra [0,1]$.
The reason we move away from list distributions is to allow for
infinitely-supported limit distributions.
We note with $\seman{D}$ the mathematical
distribution obtained from the list distribution $D$ (with the expected definition).
We also extend definitions over mathematical distributions to list distributions
by applying $\seman{-}$ where appropriate.

\begin{defn}
  Given $D, E$ mathematical distributions, we define the distance between them as
  $\dist(D,E) = \sum_{a \in A} \abs{D(a) - E(a)}$.
\end{defn}

\begin{defn}[Limit of a sequence]
  Given an infinite sequence of mathematical distributions $D_0, D_1, \ldots$ we say that $L$ is a
  limit for the sequence if for every $\eps > 0$, there exists $N > 0$ such that
  for all $i \ge N$, $\dist(D_i, L) < \eps$.
\end{defn}

Note that this
distance is the $L^1$ distance and the definition of limit is the usual one
for metric spaces.
It is then well known that limits for a given sequence are unique.
%
We are interested in limits composed of terminal elements, representing a
distribution of values.
For that, the following definition is useful.

\begin{defn}
  For a mathematical distribution $D$, we define its ``liveness'' as the sum of
  weights for non-terminal elements. That is,
  $
      \liv(D) = \sum_{a \in \mathrm{dom}(\Ra)} D(a)
  $
\end{defn}

Note that the liveness of a list distribution cannot increase by evolution,
and that $\liv(D) = 0$ iff $D$ is terminal.
Moreover, since the normalised part of a distribution cannot evolve,
liveness provides an upper bound on the possible distance to be attained
by evolution, as the following lemma states.

\begin{lemma}
    \label{lem:liv-dist}
    If $D \trc\Rb E$, then
    $\dist(D, E) \le 2 \cdot \liv(D)$.
\end{lemma}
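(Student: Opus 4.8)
The plan is to bound the distance by decomposing $E$ according to which mass in $D$ actually evolves, using the structure of the $\Rb$ relation. First I would use Lemma~\ref{lem:decomp-n} to write the reduction $D \trc\Rb E$ as $D \trc{S^*} D' \trc{\Pb^n} D'' \trc{(FJ)^*} E$, so that $\seman{D} = \seman{D'}$ and $\seman{D''} = \seman{E}$; since distance is defined on mathematical distributions and is invariant under $\ssim$, it suffices to bound $\dist(\seman{D'}, \seman{D''})$ where the only genuine change comes from the $\Pb^n$ part. So without loss of generality I may assume $D \trc{\Pb} E$ directly (a single multi-step parallel evolution, after normalising the representation).

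Next, the key observation is that $\Pb$ can only rewrite points $(p,a)$ with $a \in \mathrm{dom}(\Ra)$, \ie~non-terminal elements, and that each such rewrite replaces $(p,a)$ by $pA$ where $A \in \Dist_1(A)$ is a normalised distribution — so the total weight is preserved and the weight ``moved'' is at most $p$. Accumulating over all points rewritten in the chain from $D$ to $E$, the total weight that is displaced is at most $\sum \{ p : (p,a) \text{ a point of } D,\ a \text{ non-terminal}\} = \liv(\seman D)$ — here I use the fact, noted just before the lemma, that the normalised (terminal) part of a distribution cannot evolve and liveness does not increase. Writing $\seman D = \mu + \nu$ where $\mu$ is the sub-distribution on terminal elements and $\nu$ the sub-distribution on non-terminal elements (so $\|\nu\|_1 = \liv(\seman D)$), evolution fixes $\mu$ pointwise and transforms $\nu$ into some $\nu'$ with $\|\nu'\|_1 = \|\nu\|_1$; hence $\seman E = \mu + \nu'$ and
\[
  \dist(\seman D, \seman E) = \dist(\nu, \nu') \le \|\nu\|_1 + \|\nu'\|_1 = 2\,\liv(\seman D),
\]
using the triangle inequality $\dist(\nu,\nu') \le \dist(\nu,0) + \dist(0,\nu')$ and $\dist(\xi,0) = \|\xi\|_1$.

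Formally I would prove the single-step bound $D \Pb E \implies \dist(\seman D,\seman E)\le 2\,\liv(\seman D)$ by induction on the derivation of $D \Pb E$ (the three rules of the $\Pb$ definition), then lift to $\trc\Pb$ by induction on the number of steps, using that $\liv$ is non-increasing so the bound $2\,\liv(\seman D)$ dominates at every stage and that distance satisfies the triangle inequality along the chain. The main obstacle is bookkeeping: since list distributions allow repeated elements and the same element may be expanded several times with different successor distributions, I must be careful that the ``moved mass'' is counted against the \emph{original} weight in $D$ and not double-counted — phrasing everything in terms of the mathematical distribution $\seman{-}$, where weights on equal elements are already collapsed, is what makes this clean, and is precisely why the statement is phrased with $\dist$ rather than a syntactic notion.
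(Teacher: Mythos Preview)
Your overall approach matches the paper's: reduce to a pure $\trc\Pb$ segment via Lemma~\ref{lem:decomp-n}, split $D$ into its terminal and non-terminal parts, observe that the terminal part is untouched by evolution, and bound the distance by twice the weight of the non-terminal part. The paper carries out the split at the list level (writing $D' = D_l \app D_t$ and hence $E' = E'' \app D_t$) while you work through $\seman{-}$, but the argument is the same.

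There is, however, a genuine gap in your proposed \emph{formalization}. Proving the single-step bound $\dist(\seman D,\seman E) \le 2\,\liv(\seman D)$ and then ``lifting to $\trc\Pb$'' via the triangle inequality along a chain $D = D_0 \Pb D_1 \Pb \cdots \Pb D_n = E$ only yields
\[
  \dist(\seman D,\seman E) \;\le\; \sum_{i=0}^{n-1} 2\,\liv(\seman{D_i}) \;\le\; 2n\,\liv(\seman D),
\]
which is too weak: the non-increase of liveness bounds each summand by $2\,\liv(\seman D)$, but you are adding $n$ of them. The correct formalization is precisely the one you already sketched informally: prove directly, by induction on the length of the $\Pb$-chain, the invariant that the terminal part $\mu$ of $\seman D$ survives unchanged, so that $\seman E = \mu + \nu'$ with $\|\nu'\|_1 = \|\nu\|_1$, and conclude $\dist(\seman D,\seman E) = \dist(\nu,\nu') \le 2\,\liv(\seman D)$ in one step. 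This is exactly what the paper does when it writes $E' = E'' \app D_t$; drop the triangle-inequality plan and make that invariant your induction hypothesis instead.
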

\begin{proof}
  By Lemma~\ref{lem:decomp-n} there exist
  $D'$ and $E'$ such that $D \ssim D' \trc\Pb E' \ssim E$.
  Because of the equivalences, it suffices to show the result for $D'$ and $E'$.
  Assume, without loss of generality, that $D' = D_l \app D_t$, where all
  elements of $D_l$ are not terminal, and all those of $D_t$ are.
  Since parallel evolution is local and terminal elements cannot evolve, we have
  that $E' = E'' \app D_t$ for some $E''$.
  Then, $\dist(D', E')$ is simply $\dist(D_l, E'')$.
  Note that $\liv(D')$ is the weight of $D_l$ and of $E''$.
  Since distance is bounded by total weight, it follows that it is at most
    $2 \cdot \liv(D') = 2 \cdot \liv(D)$.
    %
\end{proof}

Now, we can extend our notion of uniqueness of terminal distributions to limit distributions
of terminal elements, accounting for an infinite sequence of reductions.
\begin{defn}[ULD]
    A PARS $\A$ has ``unique limit distributions'' when for every
    $D$ that is the root of
    two infinite \mbox{${\Rb}$-chains} $E_i$ and $F_j$ with respective limits $E_\infty$ and
    $F_\infty$ terminal distributions,
    then $E_\infty = F_\infty$.
\end{defn}
\begin{lemma}
    If $\A \models \CR$, then $\A \models \ULD$.
\end{lemma}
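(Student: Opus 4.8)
The plan is to reduce everything to the metric-space fact that a limit is unique: since $\dist$ separates points on mathematical distributions (i.e.\ $\dist(D,E) = 0$ forces $D = E$), it suffices to show $\dist(E_\infty, F_\infty) < \eps$ for every $\eps > 0$. So the proof will fix $\eps > 0$ and produce a single distribution $C$ that is $\Rb$-reachable from both chains and whose distance to suitable chain members is controlled by $\eps$, then close up with the triangle inequality.

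Concretely, I would first use the definition of limit to pick an index $i$ with $\dist(E_i, E_\infty) < \eps$ and an index $j$ with $\dist(F_j, F_\infty) < \eps$. The key observation is that $E_\infty$ and $F_\infty$, being terminal, have zero liveness, and since $\abs{\liv(D) - \liv(E)} \le \dist(D,E)$ holds directly from the definitions of $\liv$ and $\dist$, we get $\liv(E_i) < \eps$ and $\liv(F_j) < \eps$. Because $E_i$ and $F_j$ lie on $\Rb$-chains of root $D$, we have $D \trc\Rb E_i$ and $D \trc\Rb F_j$, so $\A \models \CR$ yields a common reduct $C$ with $E_i \trc\Rb C \trc\Lb F_j$. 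Lemma~\ref{lem:liv-dist} then gives $\dist(E_i, C) \le 2\liv(E_i) < 2\eps$ and $\dist(F_j, C) \le 2\liv(F_j) < 2\eps$, and hence
\[
  \dist(E_\infty, F_\infty) \le \dist(E_\infty, E_i) + \dist(E_i, C) + \dist(C, F_j) + \dist(F_j, F_\infty) < 6\eps.
\]
As $\eps$ was arbitrary, $\dist(E_\infty, F_\infty) = 0$, so $E_\infty = F_\infty$.

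The only delicate points are bookkeeping ones, not real obstacles: that $\liv$ cannot jump across a limit (handled by the $1$-Lipschitz estimate above), and that Lemma~\ref{lem:liv-dist}, stated for list distributions and $\trc\Rb$, already bounds the $L^1$-distance of the induced mathematical distributions once the $\seman{-}$ convention is unfolded. I expect the main conceptual step to be recognizing that confluence lets the two infinite reductions ``meet in the middle'' at a single $C$ whose distances to $E_i$ and $F_j$ depend only on their (small) liveness, so that no uniform control over the whole chains is required.
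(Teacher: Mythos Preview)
Your proof is correct and follows essentially the same route as the paper's: pick chain members close to the terminal limits, bound their liveness via the distance to those limits, use confluence to find a common reduct $C$, bound $\dist(E_i,C)$ and $\dist(F_j,C)$ via Lemma~\ref{lem:liv-dist}, and conclude by the triangle inequality. The only cosmetic difference is that the paper chooses the initial threshold $\eps/6$ so the final bound comes out as $\eps$, whereas you use $\eps$ and obtain $6\eps$; both are equivalent since $\eps$ is arbitrary.
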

\noindent
\begin{minipage}[l]{\textwidth}
\setlength{\columnsep}{10pt}%
\begin{wrapfigure}{r}{0pt}
    \scalebox{0.8}{
        \begin{tikzpicture}[yscale=0.9]
            \draw(0,0)    node (D) [] {$D$};
            \draw(-1,-1.5)  node (Ei) [] {$E_i$};
            \draw( 1,-1.5)  node (Fj) [] {$F_j$};
            \draw(-2,-3)  node (E) [] {$E_\infty$};
            \draw( 2,-3)  node (F) [] {$F_\infty$};
            \draw( 0,-3)  node (C) [] {$C$};

            \draw[->>,sloped,left]  (D)   edge node[sloped,pos=.9,below=0] {*} (Ei);
            \draw[->>,sloped,right] (D)   edge node[sloped,pos=.9,below=0] {*} (Fj);

            \draw[->>]       (Ei)  edge node[sloped,pos=.4,above] {\tiny$[0,\sfrac\eps3)$} node[sloped,pos=.9,below=0] {*} (C);
            \draw[->>]       (Fj)  edge node[sloped,pos=.4,above] {\tiny$[0,\sfrac\eps3)$} node[sloped,pos=.9,below=0] {*} (C);
            \draw[->>,dotted] (Ei) edge node[sloped,pos=.4,above] {\tiny$[0,\sfrac\eps6)$} node[sloped,pos=.9,below=0] {*} (E);
            \draw[->>,dotted] (Fj) edge node[sloped,pos=.4,above] {\tiny$[0,\sfrac\eps6)$} node[sloped,pos=.9,below=0] {*} (F);
    \end{tikzpicture}
    }
  \end{wrapfigure}
\begin{proof}
    Take $\eps > 0$.
    %
  By the definition of limit, we know there are $i,j$ such that
    $\dist(E_i, E_\infty) < \sfrac\eps6 $ and $ \dist(F_j, F_\infty) < \sfrac\eps6$.
  Since $E_\infty$ and $F_\infty$ are terminal, $\liv(E_i)$ and
  $\liv(F_j)$ must be less than $\sfrac\eps6$.
  The distributions $E_i$ and $F_j$ are reachable by a finite amount of
  ${\Rb}$ steps, so by confluence there exists a distribution $C$ such that $E_i
  \trc\Rb C \trc\Lb F_j$.
  From Lemma~\ref{lem:liv-dist}, we get that $\dist(E_i, C) < \sfrac\eps3$ and
  likewise for $F_j$.
  From these four bounds and the triangle inequality
  we get
  that $\dist(E_\infty, F_\infty) < \eps$.
  Since this is the case for any positive $\eps$, $\dist(E_\infty, F_\infty)$ must be exactly
  $0$, and therefore $E_\infty = F_\infty$.
\end{proof}
\end{minipage}

\section{Case Studies}\label{sec:examples}


\subsection{An affine probabilistic \texorpdfstring{$\lam$}{lambda}-calculus: \texorpdfstring{$\llin$}{lambda1}}
\label{sec:llin}

In our introductory example, we used the term $(\lam x.~x-x)~\die$ as an
example of a non-confluent computation.
%
%
There seem to be three ingredients needed for this failure of confluence
of a term $(\lam x.M)N$:
(1) $x$ appears free more than once in $M$
(2) $N$ has a non-Dirac terminal distribution
(3) both call-by-name and call-by-value reductions are possible.

In this section we define a probabilistic $\lam$-calculus,
dubbed $\llin$, that prevents the combination of these three features by
providing two kinds of abstractions, one restricting duplication and one restricting evaluation order.%
\footnote{For more expressivity, a third kind
without either restriction, but forbidding
probabilistic arguments, could be added. We do not deem this as interesting for the scope of this paper.}
We show $\llin$ to be confluent (by a diamond property), giving evidence
that little more than affinity of probabilistic arguments is required
to achieve a confluent probabilistic programming language.

The calculus is heavily based on the one defined in~\cite{Simpson}. The set of
\emph{pre}-terms is given by the following grammar
\[
    M, N ::= x \bnfor M N
               \bnfor \lam x.M
               \bnfor \lam!x.M
               \bnfor \bang{M}
               \bnfor M \oplus_p N
\]
where the main novelty is the probabilistic choice operator $\oplus_p$, for any
real number $p$ in the open interval $(0,1)$.
For an abstraction $\lam x. M$, $x$ must be \emph{affine} in $M$, that
is, $M$ can have at most one free occurrence of $x$.
If there is exactly one such occurrence, we say $x$ is \emph{linear} in $M$.
Banged abstractions ($\lam!$) have no such restriction.
Affinity is enforced by a well-formedness judgment, whose definition
is straightforward and which we thus omit.
We work only with well-formed pre-terms, which form the set of terms.
For the sake of brevity, given a term $M$ and distribution $D = \Lopen (p_i, N_i) \Lclose$
we use the notation $MD$ to represent the distribution $\Lopen (p_i, M N_i) \Lclose$,
and similarly for all other syntactic constructs.

\begin{figure}
    \[
        \def\arraystretch{2.9}
    \arraycolsep=10pt
    \begin{array}{cc}
        \inferrule*[lab=R-$\beta$]
                { }
        {(\lam x. M)N \Ra \Lopen (1, M[N/x]) \Lclose}
        &
        \inferrule*[lab=R-$\beta!$]
                                { }
        {(\lam! x. M)\bang{N} \Ra \Lopen (1, M[N/x]) \Lclose}
                                \\
        \multicolumn{2}{c}{
            \inferrule*[lab=R-$\oplus$]
                    { }
            {M \oplus_p N \Ra \Lopen (p, M) \Lsep (1\!-\!p, N) \Lclose }
        }
    \end{array}
\]
    \[
        \def\arraystretch{3.0}
    \arraycolsep=10pt
    \begin{array}{ccc}
        \inferrule*[lab=R-$\oplus$-L]
                        {M \Ra D}
            {M \oplus_p N \Ra D \oplus_p N }
        &
        \inferrule*[lab=R-$\oplus$-R]
                        {N \Ra D}
        {M \oplus_p N \Ra M \oplus_p D }
        &
        \inferrule*[lab=R-AppL]
                        {M \Ra D}
        {MN \Ra DN}
        \\
        \inferrule*[lab=R-AppR]
                        {N \Ra  D}
        {MN \Ra  MD}
        &
        \inferrule*[lab=R-$\lam$]
                 {M \Ra D}
        {\lam x.M \Ra \lam x.D}
        &
        \inferrule*[lab=R-$\lam!$]
                        {M \Ra D}
        {\lam! x.M \Ra \lam! x.D}
    \end{array}
\]
    \caption{Full semantics for \llin}
    \label{fig:llin-semantics}
\end{figure}

The operational semantics is provided as a PARS in
Fig.~\ref{fig:llin-semantics}.
Terms of the form $\bang{M}$ do not reduce and are called
\emph{thunks}.
A banged abstraction can only $\beta$-reduce when applied to a thunk.
This effectively implies that banged abstractions follow a fixed strategy
(which is, morally, call-by-value until the argument is reduced to a thunk and call-by-name afterwards).%
\footnote{We are thus adopting ``surface reduction'' only
since ``internal reductions'' \cite{Simpson} hinder confluence.}

To prove the diamond property for $\llin$, we first need two substitution lemmas.
When $D = \Lopen (p_i, a_i) \Lclose_i$, we write $D[M/x]$
for the distribution $\Lopen (p_i, a_i[M/x]) \Lclose_i$.
Similarly, $M[D/x]$ denotes $\Lopen (p_i, M[a_i/x]) \Lclose_i$.

\begin{lemma}\label{lem:lin-subst-out}
    If $M \Ra D$, then $M[N/x] \Ra D[N/x]$.
\end{lemma}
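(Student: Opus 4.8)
The plan is to prove Lemma~\ref{lem:lin-subst-out} by induction on the derivation of $M \Ra D$, following the rules of Fig.~\ref{fig:llin-semantics}. The statement says that substitution $[N/x]$ commutes with pointwise evolution: pushing a step under a substitution preserves it. Since $x$ is affine in $M$ (as we work only with well-formed terms), the occurrence of $x$ being substituted is disjoint from the redex being contracted in almost every case, so the step ``goes through'' unchanged.

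\medskip

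First I would handle the base rules. For \rn{R-$\oplus$}, from $M \oplus_p M' \Ra \Lopen (p, M) \Lsep (1-p, M') \Lclose$ we get by the same rule $(M \oplus_p M')[N/x] = M[N/x] \oplus_p M'[N/x] \Ra \Lopen (p, M[N/x]) \Lsep (1-p, M'[N/x]) \Lclose$, which is exactly $D[N/x]$. For \rn{R-$\beta$}, $(\lam y.M)M' \Ra \Lopen (1, M[M'/y]) \Lclose$; applying $[N/x]$ to the redex gives $(\lam y. M[N/x])(M'[N/x])$ (after the usual $\alpha$-renaming so $y \neq x$ and $y \notin \mathrm{fv}(N)$), which reduces by \rn{R-$\beta$} to $\Lopen (1, M[N/x][M'[N/x]/y]) \Lclose$; this equals $\Lopen (1, M[M'/y][N/x]) \Lclose$ by the standard substitution composition lemma. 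The case \rn{R-$\beta!$} is identical, noting that $(\bang{M'})[N/x] = \bang{M'[N/x]}$ so the argument is still a thunk. For the congruence rules \rn{R-$\oplus$-L}, \rn{R-$\oplus$-R}, \rn{R-AppL}, \rn{R-AppR}, \rn{R-$\lam$}, \rn{R-$\lam!$}, I would apply the induction hypothesis to the premise $M' \Ra D'$ to get $M'[N/x] \Ra D'[N/x]$, then reapply the same congruence rule; here the notation convention $MD$ for $\Lopen(p_i,MN_i)\Lclose$ makes the bookkeeping transparent, e.g.\ for \rn{R-AppL} we go from $M'M'' \Ra D'M''$ to $(M'M'')[N/x] = M'[N/x]\,M''[N/x] \Ra D'[N/x]\,M''[N/x] = (D'M'')[N/x]$.

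\medskip

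\textbf{The main obstacle} is bookkeeping around variable capture and affinity rather than any deep difficulty: one must check that applying $[N/x]$ to a well-formed term still yields a well-formed term (so that ``$\Ra$'' is even defined on it), and that the $\beta$-cases really do compose correctly, which requires the auxiliary fact that if $x$ is affine in $M$ and $y$ is affine in $M$ (with appropriate freshness), then $M[M'/y][N/x] = M[N/x][M'[N/x]/y]$ when $x \notin \mathrm{fv}(M')$ is not assumed — in general one needs $M[M'/y][N/x] = M[N/x][M'[N/x]/y]$ which holds provided $y \notin \mathrm{fv}(N)$, arranged by $\alpha$-conversion. Affinity is what guarantees this bookkeeping stays simple: since $x$ occurs at most once, substitution never duplicates $N$, so weights are never disturbed and the distribution $D[N/x]$ is literally $D$ with $[N/x]$ applied to each element. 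I expect the proof to be short, with the only real content being the substitution-composition identity in the two $\beta$-rule cases.
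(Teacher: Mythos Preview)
Your proposal is correct and matches the paper's proof, which is simply ``By induction on $M \Ra D$''; you have just spelled out the cases. One small correction: affinity of $x$ in $M$ is neither assumed nor needed for this lemma (it is the companion Lemma~\ref{lem:lin-subst-in} that crucially uses linearity of the substituted variable); the substitution-composition identity in the $\beta$-cases holds for arbitrary terms, and weights are untouched by $[N/x]$ regardless of how many times $x$ occurs.
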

\begin{proof}
    By induction on $M \Ra D$.
\end{proof}
\begin{lemma}\label{lem:lin-subst-in}
    If $M \Ra D$, and $x$ is linear in $N$, then $N[M/x] \Ra N[D/x]$.
\end{lemma}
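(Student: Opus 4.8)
If $M \Ra D$, and $x$ is linear in $N$, then $N[M/x] \Ra N[D/x]$.

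Let me think about how I'd prove this.The plan is to proceed by structural induction on the term $N$, using the hypothesis that $x$ is linear in $N$ (i.e.\ occurs exactly once free) to pinpoint, at each step, the unique subterm where the substitution actually acts. The base case is $N = x$: here $N[M/x] = M$, $N[D/x] = D$, and the conclusion $M \Ra D$ is exactly the hypothesis. The case $N = y$ for $y \neq x$ is vacuous, since $x$ is not linear in a variable distinct from itself; similarly $N = \bang{N'}$ cannot occur with $x$ linear in it unless we recall that thunks do not reduce — but in fact $x$ \emph{can} be linear in $\bang{N'}$, so this case needs care (see below).

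For the inductive cases, I would split on the shape of $N$. Since $x$ occurs linearly, in a composite term exactly one immediate subterm contains the free occurrence of $x$ (the others contain none, so substitution leaves them untouched). For instance, if $N = N_1 N_2$ with $x$ linear in $N_1$ (and not free in $N_2$), then $N[M/x] = (N_1[M/x])N_2$ and $N[D/x] = (N_1[D/x])N_2$; by the induction hypothesis $N_1[M/x] \Ra N_1[D/x]$, and applying \rn{R-AppL} (together with the convention $DN_2$ for the distribution $\Lopen(p_i, a_iN_2)\Lclose_i$) yields $(N_1[M/x])N_2 \Ra (N_1[D/x])N_2$, i.e.\ $N[M/x]\Ra N[D/x]$. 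The symmetric sub-case uses \rn{R-AppR}; the cases $N = \lam y.N'$, $N = \lam! y.N'$, and $N = N_1 \oplus_p N_2$ are handled identically via \rn{R-$\lam$}, \rn{R-$\lam!$}, \rn{R-$\oplus$-L} and \rn{R-$\oplus$-R} respectively. In each case one must note that the relevant reduction rule is \emph{compositional in the syntactic position}, which is precisely the content of the notation conventions $MD$, $DN$, $D\oplus_p N$, etc.

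The main obstacle is the case $N = \bang{N'}$. Here $N[M/x] = \bang{(N'[M/x])}$ is a thunk and so \emph{does not reduce at all} — there is no rule with a $\bang{(-)}$ on the left. Yet $x$ can legitimately be linear in $\bang{N'}$, and $N[D/x] = \Lopen(p_i, \bang{(N'[a_i/x])})\Lclose_i$ is generally a non-Dirac distribution, so we cannot have $\bang{(N'[M/x])} \Ra N[D/x]$. This means the lemma as stated is false unless $N$ ranges only over terms in which $x$ does not occur under a bang; I would expect the intended reading to be that the occurrence of $x$ in $N$ is \emph{not} inside a thunk (this is consistent with the calculus's design, since the free variables captured into redexes by $\beta!$ come from the body of a banged abstraction applied to a thunk, and the hypothesis is used exactly for such arguments). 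Under that reading the $\bang{N'}$ case is again vacuous and the induction goes through cleanly; I would flag this hypothesis explicitly rather than leave it implicit.
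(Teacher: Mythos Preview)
Your proposal is correct and follows essentially the same approach as the paper, which simply states ``by induction on the well-formedness of $N$.'' The subtlety you flagged about $N = \bang{N'}$ is precisely why the paper phrases the induction as one on the \emph{well-formedness derivation} rather than on the raw syntax: in the Simpson-style calculus on which \llin is based, the (omitted but standard) well-formedness rule for $\bang{N'}$ requires the linear context to be empty, so a linear $x$ cannot occur free under a bang and that case is indeed vacuous. Your instinct to make the hypothesis explicit is exactly what the choice of induction principle accomplishes.
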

\begin{proof}
    By induction on the well-formedness of $N$.
\end{proof}

\noindent
Lemmas~\ref{lem:lin-subst-out} and~\ref{lem:lin-subst-in} are analogous to both
statements of~\cite[Lemma 3.1]{Simpson}. Armed with both, we can
prove the following theorem, which implies the diamond property.
\begin{thm}
    \label{thm:llin-diam}
    If $D \La M \Ra E$ then there exist
    $C, C'$ such that $D \Pb C$
             and      $E \Pb C'$
            with $C \ssim C'$.
\end{thm}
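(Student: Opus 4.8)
The plan is to establish the local-diagram statement by a careful case analysis on the pair of single-step reductions $D \La M \Ra E$, treating the root reduction steps of the PARS. Since both reductions emanate from the same term $M$, I would first classify them by which rule of Figure~\ref{fig:llin-semantics} was used at the root of each derivation. The key observation is that $\llin$ is designed precisely so that the problematic critical pair — call-by-name versus call-by-value on a duplicating redex applied to a probabilistic argument — cannot arise: linear abstractions $\lam x.M$ can duplicate nothing (affinity), and banged abstractions only fire against thunks $\bang N$, which are themselves irreducible. So the critical pairs that remain are genuinely joinable in one parallel step on each side.

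First I would dispatch the trivial cases: if the two reductions are literally the same step, take $C = C' = D = E$. Next, the disjoint/parallel cases, where the two redexes occur in non-overlapping positions (e.g.\ \rn{R-AppL} on $M$ and \rn{R-AppR} on $N$ inside $MN$, or the two sides of an $\oplus_p$): here one closes the diagram by performing the other reduction on the result, using that the notation $MD$, $DN$, $D \oplus_p E$, etc., distributes over the list structure, and using that $\Pb$ is compositional (from Lemma~\ref{lem:tree-sim}'s proof) so the per-branch reductions assemble into a single $\Pb$ step; the equivalence $C \ssim C'$ is then actually equality, or at worst a reordering. The substantive cases are the overlaps of a root redex with a reduction in one of its immediate subterms: \rn{R-$\beta$} versus \rn{R-AppL}/\rn{R-AppR}/\rn{R-$\lam$} acting inside $(\lam x.M)N$; \rn{R-$\beta!$} versus a reduction inside $(\lam!x.M)\bang N$ (note a reduction inside $\bang N$ is impossible, so only reductions in $M$ matter here); and \rn{R-$\oplus$} versus \rn{R-$\oplus$-L}/\rn{R-$\oplus$-R}.

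For the $\beta$-overlap cases, the two substitution lemmas do the work. If $(\lam x.M)N \Ra \Lopen(1,M[N/x])\Lclose$ on one side and, say, $N \Ra D_N$ with $(\lam x.M)N \Ra (\lam x.M)D_N$ on the other, then from the first side I contract the remaining redexes: $(\lam x.M)D_N \Ra M[D_N/x]$ (applying \rn{R-$\beta$} to each point, which is a $\Pb$ step), and from the $M[N/x]$ side I use Lemma~\ref{lem:lin-subst-in} — since $x$ is affine, hence linear or absent, in $M$ — to get $M[N/x] \Ra M[D_N/x]$; the two results coincide up to $\ssim$ (when $x$ is absent from $M$, $M[D_N/x]$ is a constant distribution of total weight one, so it is $\ssim$-equivalent to $\Lopen(1,M[N/x])\Lclose$ via \rn{Join}). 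Symmetrically, if the inner reduction is $M \Ra D_M$, then one side gives $(\lam x.M)N \Ra \lam x.D_M$ then $\Ra D_M[N/x]$, and the other gives $M[N/x] \Ra D_M[N/x]$ by Lemma~\ref{lem:lin-subst-out}. The \rn{R-$\beta!$} case is analogous but easier, since the argument $\bang N$ is frozen. Finally, \rn{R-$\oplus$} versus \rn{R-$\oplus$-L}: from $M \oplus_p N \Ra \Lopen(p,M)\Lsep(1-p,N)\Lclose$ and $M \oplus_p N \Ra D_M \oplus_p N$, one closes at $p D_M \app \Lopen(1-p,N)\Lclose$, reached from the first distribution by applying $M \Ra D_M$ to its single $M$-point (a $\Pb$ step) and from the second by applying \rn{R-$\oplus$} pointwise; again the results agree up to $\ssim$.

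The main obstacle I anticipate is purely bookkeeping rather than conceptual: making the notation $MD$, $D[N/x]$, $M[D/x]$ interact cleanly with $\app$, with scaling, and with the inference rules defining $\Pb$, so that "apply the residual redex in every point of the distribution" is rigorously a single $\Pb$-step and the two sides are visibly $\ssim$-equivalent. In particular one must be careful that when $x$ does not occur in $M$ the substitution $M[D/x]$ collapses a genuine distribution to a Dirac one only up to $\ssim$, which is exactly why the theorem is stated with $C \ssim C'$ rather than $C = C'$. None of the cases require nested induction — the substitution lemmas already package the needed induction — so once the case split is laid out, each case is a short computation. Having proven this theorem, the diamond property for $\Pb \rmod \ssim$ follows by Criterion~\ref{crit:diam}, and hence $\llin$ is distribution confluent.
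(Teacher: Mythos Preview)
Your approach is essentially the paper's own (the paper's proof is the single line ``By induction on the shape of $M \Ra D$ and $M \Ra E$''), and your treatment of the genuine critical pairs---\rn{R-$\beta$} versus a reduction in $M$ or in $N$, \rn{R-$\beta!$} versus a reduction under the $\lam!$, and \rn{R-$\oplus$} versus \rn{R-$\oplus$-L}/\rn{R-$\oplus$-R}---is correct and uses Lemmas~\ref{lem:lin-subst-out} and~\ref{lem:lin-subst-in} exactly as intended. Your observation about the affine-but-not-linear case forcing $C \ssim C'$ rather than $C = C'$ is also spot on.

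There is, however, one gap in your case split. Your three buckets (same step; disjoint positions; root redex versus an immediate-subterm reduction) omit the case where \emph{both} reductions descend into the \emph{same} subterm via the \emph{same} congruence rule---for instance \rn{R-AppR} versus \rn{R-AppR} on $MN$ with $N \Ra D_N$ and $N \Ra D_N'$ distinct, or \rn{R-$\lam$} versus \rn{R-$\lam$}. Closing that peak requires an inductive hypothesis on the smaller derivations $D_N \La N \Ra D_N'$, after which one applies the surrounding context pointwise and uses compositionality of $\Pb$. This is precisely why the paper phrases the proof as an \emph{induction} on the derivations and not as a flat case analysis; your closing remark that ``none of the cases require nested induction'' is therefore not quite right. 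The fix is immediate---add the same-congruence bucket and invoke the induction hypothesis---and nothing else in your proposal needs to change.
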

\begin{proof}
  By induction on the shape of $M \Ra D$ and $M \Ra E$.
\end{proof}

\noindent
By this theorem and Corollary~\ref{crit:diam} we conclude that
$\llin$ is confluent, and thus enjoys both UTD and ULD.

%
%

\subsection{A quantum \texorpdfstring{$\lam$}{lambda}-calculus: \texorpdfstring{\qstar}{Q*}}
The \qstar calculus~\cite{DalLago} is a quantum programming language
which models measurement,
an inherently probabilistic operation.
Reduction occurs between \emph{configurations}, which are terms coupled with a
quantum state, and which we will not detail further.
Its semantics does not fix a strategy and, as $\llin$, is also based on \cite{Simpson}.
Quantum variables in \qstar are linear (and not affine),
so they cannot be duplicated nor discarded, as per the no-cloning
\cite{WoottersZurekNATURE82} and no-erasure \cite{PatiBraunsteinNATURE00}
properties of quantum physics.
%
Reduction steps are paired with a \emph{label} describing the reduction
(\eg~which qubit was measured).

The authors prove a property called \emph{strong confluence} which asserts
that any two maximal (possibly infinite) computation trees with a common root
have an equivalent normalized support (that is, the normalized part of both
support distributions are equivalent) and, further, that any normal form
appears in an equal amount of leaves on both trees.

As they note, this property is quite strong,
and not enjoyed by the classical $\lam$-calculus.
Consider the term $(\lam x. \lam y. y) \Omega$.
It has an infinite computation tree without any normal form (as the term
reduces to itself in call-by-value) and also a finite tree with a single $\lam y. y$ leaf,
which of course does not have an equivalent normalized support.
Note that the \qstar well-formedness judgment rejects this term as it is not
linear.

To prove strong confluence, a crucial lemma called \emph{quasi-one-step confluence} is proved,
which is morally a diamond property but with slightly different behaviours
according to the reductions taken.
Reductions are distinguished between two sets, $\N$ and $\K$, and measurements of the form
$\meas_r$.
We will not describe these sets nor \qstar's semantics
(its full description is found in \cite{DalLago}),
and will merely
state the lemma.
The notation $C \ra^p_\alpha D$ means ``$C$ reduces to $D$ with probability
$p$ via the label $\alpha$''; and
$C \ra^p_\N D$ means $C \ra^p_\alpha D$ for some $\alpha \in \N$ (idem $\K$).

\begin{lemma}[Quasi-one-step Confluence for \qstar {\cite[Proposition 4]{DalLagoArXiv}}]\label{lem:qstar-quasi}~

\noindent
    Let $C,D,E$ be configurations and $C \ra_\alpha^p D$, $C \ra_\beta^s E$,
    then:
    \begin{itemize}
        \item If $\alpha \in \K$ and $\beta \in \K$,
            then either $D = E$ or there is $F$ with $D \ra^1_\K F$ and $E \ra^1_\K F$.

        \item If $\alpha \in \K$ and $\beta \in \N$,
            then either $D \ra^1_\N E$ or there is $F$ s.t. $D \ra^1_\N F$ and $E \ra^1_\K F$.

        \item If $\alpha \in \K$ and $\beta = \meas_r$,
            then there is $F$ with $D \ra_{\meas_r}^s F$ and $E \ra^1_\K F$.

        \item If $\alpha \in \N$ and $\beta \in \N$,
            then either $D = E$ or there is $F$ with $D \ra^1_\N F$ and $E \ra^1_\N F$.

        \item If $\alpha \in \N$ and $\beta = \meas_r$,
            then there is $F$ with $D \ra_{\meas_r}^s F$ and $E \ra^1_\N F$.

        \item If $\alpha = \meas_r$ and $\beta = \meas_q$ (with $r \ne q$),
            then there are $t,u \in [0,1]$ and an $F$ such that $pt = su$,
            $D \ra_{\meas_q}^t F$ and $E \ra_{\meas_r}^u F$.
    \end{itemize}
\end{lemma}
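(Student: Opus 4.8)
The plan is to prove this by well-founded induction on the derivation of $C \ra_\alpha^p D$ (equivalently, on the syntactic structure of the configuration $C$, following the contextual-closure rules of \qstar's semantics), with an inner case analysis on how the two contracted redexes sit inside $C$. This is essentially the argument carried out in \cite{DalLagoArXiv}; what follows is an outline of how it goes.

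The first and easiest family of cases is when the two redexes occur in disjoint subterms of $C$. Then each side can still perform the reduction done on the other side, yielding a common $F$; the labels are unchanged, and the probabilities are unchanged because the two operations act on independent parts of the configuration. The only nontrivial point is when both are measurements $\meas_r$ and $\meas_q$ with $r \ne q$: the joint distribution over the two outcomes factorises, so there are $t, u$ (each equal to the joint probability of the pair of outcomes, whence $pt = su$) with $D \ra_{\meas_q}^t F$ and $E \ra_{\meas_r}^u F$, which is exactly the last bullet. This disjoint subcase occurs inside every bullet of the statement.

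Next come the genuinely overlapping cases, where one redex is nested inside the other, or the two redexes coincide. Here the crucial structural facts about \qstar are that quantum variables are linear --- so contracting the outer redex neither duplicates nor erases the inner one, which therefore survives in exactly one residual --- and that the calculus uses surface reduction, so no reduction is hidden under a construct that would block commutation. One then closes the diagram by firing the outer redex on the side that performed the inner reduction and pushing the unique residual of the inner redex through the substitution on the other side, using substitution lemmas analogous to Lemmas~\ref{lem:lin-subst-out} and~\ref{lem:lin-subst-in}. Whether the closing step on a given side lands in $\N$, in $\K$, or is a measurement is dictated by what kind of redex the residual is; the asymmetric shapes of the bullets (\eg the possibility $D \ra^1_\N E$ when $\alpha \in \K$ and $\beta \in \N$, or the absence of a ``$D = E$'' escape once a $\meas_r$ is involved) record exactly these alternatives. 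When the two redexes are literally the same, one gets $D = E$ outright for the $\N$ and $\K$ cases.

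The main obstacle I expect is the probability bookkeeping in every case involving $\meas_r$: one must check not merely that the outcome probabilities behave as claimed, but that the (sub-normalised) post-measurement quantum states obtained by measuring $r$ and then $q$ agree with those obtained by measuring $q$ and then $r$, so that the two residuals genuinely meet at a single $F$; the identity $pt = su$ is then forced rather than assumed. A secondary but pervasive difficulty is keeping the $\N$/$\K$/$\meas$ classification of the closing steps coherent across all the overlap subcases, since this lemma is meant to be iterated --- a single misclassified step would break the downstream induction that lifts it to the strong confluence of \qstar.
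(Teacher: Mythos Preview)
The paper does not actually prove this lemma: it is quoted verbatim as \cite[Proposition 4]{DalLagoArXiv} and used as a black box to derive the diamond property for \qstar via Corollary~\ref{crit:diam}. So there is no ``paper's own proof'' to compare against; the paper's entire argument for this statement is the citation.

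Your outline is a reasonable sketch of how the original proof in \cite{DalLagoArXiv} proceeds --- structural induction on the reduction derivation, a disjoint-redex case closed by independence, and an overlap case handled via linearity and substitution lemmas, with the measurement--measurement case requiring the commutation of partial traces on distinct qubits to obtain $pt = su$ and a common $F$. That said, the sketch remains high-level: the actual proof in the cited work involves a rather lengthy case analysis on the specific syntactic forms of \qstar configurations and the interaction of term-level reduction with the quantum register, none of which you spell out (and none of which the present paper provides the definitions for). So as a proof \emph{proposal} it is fine, but it is not a proof, and in the context of this paper the correct move is simply to defer to the citation, as the authors do.
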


From this lemma, the fact that there are no infinite $\K$ sequences,
and a ``probabilistic strip lemma'',
the
authors prove strong confluence~\cite[Theorem 5.4]{DalLago}.

For distribution confluence, a simpler proof can be obtained.
After modelling \qstar as a PARS (roughly using successor
distribution per label) we can readily reinterpret Lemma~\ref{lem:qstar-quasi} to prove
the diamond property for it (by Corollary~\ref{crit:diam}).
From this result, distribution confluence follows,
and therefore also uniqueness of both terminal and limit distributions.
Notably, neither the normalisation requirement for $\K$
nor the probabilistic strip lemma are needed for this fact.

Our obtained distribution confluence is similar, but neither weaker nor
stronger than strong confluence.
It is not weaker as distribution confluence guarantees that
divergences of computations without any normal form can be joined, which
strong confluence does not.
It is also not stronger as it implies nothing of limit distributions that are
not terminal, while it follows from strong confluence that they must
coincide in their normalized part.%
\footnote{It also does not imply the equality
between the amount of leaves on each tree.
This can in fact be recovered by removing the \rn{Split} and \rn{Join}
rules and (straightforwardly) deriving criteria
following
\autoref{sec:Confluence}. One can then conclude
not only that there is
the same amount but
that they are paired with the same weights.}

\section{Conclusions}\label{sec:Conclusions}

We have studied the problem of showing that a probabilistic operational semantics
is not affected by the choice of strategy.
For this purpose, we provided a definition of confluence for probabilistic systems
by defining a classical relation over distributions.
We showed our property of distribution confluence to be appropriate as,
in particular, it implies a uniqueness of terminal distributions, both for
finite and infinite reductions, and gives an equational consistency
guarantee.

We believe this development demonstrates that distribution confluence
provides a reasonable ``sweet spot'' for proving the correctness of
probabilistic semantics, as it provides the expected guarantees about
execution while allowing tractable proofs.
Concretely, the provided proofs for \llin and \qstar are in line with
what one would expect for linear calculi.

%
The proof about \qstar also partially answers the conjecture posed in \cite[Section 8]{DalLago}
(``any rewriting system
enjoying properties like Proposition 4 [our Lemma~\ref{lem:qstar-quasi}] enjoys confluence in the same sense as the one used here'') positively.
The answer is partial since distribution confluence is not strictly equivalent.

Looking ahead, there are several interesting directions to explore.
Firstly, a study of confluence dealing with \emph{terms} (and not just
abstract elements) should provide more insights applicable to
concrete languages. For terms, we expect concepts such as orthogonality to be
of interest.
Secondly, as a generalization, it seems possible to take distribution
weights from any mathematical field and not only the positive reals.
Even if interpreting such systems is not obvious, most of our results
would hold: interestingly, we have not once assumed that weights actually
represent probabilities.
Finally, a quantitative notion of confluence could also be explored,
where a distribution is considered confluent if any divergence of it can
be joined ``up to $\eps$''; in particular, obtaining useful simplified
criteria for said property seems difficult.

\subsection{Related work}


In~\cite{DalLago2017}, similar definitions of evolution and confluence
are introduced.
A subtle yet key difference with ours is that equivalent distributions are
identified and there is no partial evolution.
This means that evolution is not compositional; and in fact the
diamond property over Dirac distributions does not extend over to all
distributions.
This was wrongly stated in an early version \cite{DalLago2017arXiv} of the paper and
subsequently fixed.

In~\cite{Jano2011}, a notion of confluence is defined and proven for an
extension of $\lam_q$~\cite{vanTonderSIAM04} (a quantum $\lam$-calculus) with
measurements.
The proposed property is basically a confluence on computation trees, and the
one we study in this paper is strictly weaker, yet sufficient for UTD and consistency.

In~\cite{DalLago}, already amply discussed, the introduced property is
a strong confluence over maximal trees (either finite or infinite)
which is very related, but neither weaker nor stronger than
distribution confluence.

Finally, in \cite{BournezKirchner} and subsequently in \cite{KirkebyChristiansen},
a property of confluence is defined
and studied over probabilistic rewriting systems which do not contain
any non-determinism (\ie~where $\Ra$ is a partial function).
This is a very different notion of confluence, dealing with punctual
final results instead of distributions, and with different applications.
In this setting, distribution confluence trivially holds.

\paragraph{Acknowledgements}

We thank Exequiel Rivas for reviewing early versions of this work and for many interesting discussions.
We also thank both the anonymous reviewers and LSFA attendees for constructive and encouraging comments.

\bibliographystyle{abbrv}
\bibliography{refs}

\end{document}